\newcommand{\es}[1]{E_{#1}}	%source energy arrivals (user1)
\newcommand{\er}[1]{\bar{E}_{#1}}	%relay energy arrivals (user2)
\newcommand{\ps}[1]{P_{#1}} %source power profile
\newcommand{\pr}[1]{\bar{P}_{#1}} %relay power profile
\newcommand{\rs}[1]{r_{#1}}	%rate source
\newcommand{\rr}[1]{\bar{r}_{#1}}	%rate relay
\newcommand{\ds}[1]{\delta_{#1}} %energy transfers (delta)
\newcommand{\ops}[1]{P_{#1}^*}		%optimal power profile of source
\newcommand{\opr}[1]{\bar{P}_{#1}^*} 	%optimal power profile of relay
\newcommand{\ods}[1]{\delta_{#1}^*}	%optimal energy transfer values
\newcommand{\wls}[1]{\nu_{#1}}	%water level source
\newcommand{\wlnew}[1]{\tilde{\nu}_{#1}}	%water level for the modified problem
\newcommand{\prnew}[1]{\tilde{P}_{#1}}	%power modified problem for user 2
\newcommand{\oprnew}[1]{\tilde{P}_{#1}^*}	%optimal power modified problem user 2
\newcommand{\esource}{\textbf{E}}	%source energy arrival vector
\newcommand{\erelay}{\bar{\textbf{E}}}	%relay energy arrival vector
\newcommand{\psource}{\textbf{P}}	%source power vector
\newcommand{\prelay}{\bar{\textbf{P}}}	%relay power vector
\newcommand{\htap}[2]{T_{#1 #2}}	%horizontal tap
\newcommand{\vtap}[1]{Q_{#1}}	%vertical tap
\newcommand{\deltav}{\boldsymbol{\delta}} %energy transfer vector
\newcommand{\mycap}{\mathcal{C} (\esource,\erelay)}	%two-way capacity
\newcommand{\hp}{\theta} %use for the hyperplanes \hyperp1 R1 + \hyper2 R2
\newcommand{\rsource}{\textbf{r}^*}	%source rate vector
\newcommand{\rrelay}{\bar{\textbf{r}}^*}	%relay relay vector
\newcommand{\ors}[1]{r_{#1}^*}	%optimal rate source
\newcommand{\orr}[1]{\bar{r}_{#1}^*}	%optimal rate relay
\newtheorem{lemma}{Lemma}
\newenvironment{Proof}[1]{\medskip\par\noindent{\bf Proof:\,}\,#1}{{\mbox{\,$\blacksquare$}\medskip\par}}
\begin{document}
\IEEEoverridecommandlockouts

\title{Energy Cooperation in Energy Harvesting Communications\thanks{Berk Gurakan, Omur Ozel and Sennur Ulukus are with the Department of Electrical and Computer Engineering, University of Maryland. Jing Yang is with the Department of Electrical Engineering, University of Arkansas. This work was supported by NSF Grants CNS 09-64632, CCF 09-64645, CCF 10-18185 and CNS 11-47811, and presented in part at the IEEE ISIT, Cambridge, MA, July 2012, IEEE Asilomar Conference, Pacific Grove, CA, November 2012, and IEEE ICC, Budapest, Hungary, June 2013.}}

\author{Berk Gurakan, Omur Ozel, Jing Yang and Sennur Ulukus}

\maketitle

\vspace*{-0.7in}

\begin{abstract}
In energy harvesting communications, users transmit messages using energy harvested from nature during the course of communication. With an optimum transmit policy, the performance of the system depends only on the energy arrival profiles. In this paper, we introduce the concept of {\it energy cooperation}, where a user wirelessly transmits a portion of its energy to another energy harvesting user. This enables shaping and optimization of the energy arrivals at the energy-receiving node, and improves the overall system performance, despite the loss incurred in energy transfer. We consider several basic multi-user network structures with energy harvesting and wireless energy transfer capabilities: relay channel, two-way channel and multiple access channel. We determine energy management policies that maximize the system throughput within a given duration using a Lagrangian formulation and the resulting KKT optimality conditions. We develop a {\it two-dimensional directional water-filling algorithm} which optimally controls the flow of harvested energy in two dimensions: in time (from past to future) and among users (from energy-transferring to energy-receiving) and show that a generalized version of this algorithm achieves the boundary of the capacity region of the two-way channel.
\end{abstract}

\section{Introduction}

In energy harvesting communications, users transmit messages using energy harvested from nature \cite{Yates09TWC, tassiulas10TWC, sharma10TWC}. In such systems, transmission policies of the users need to be carefully designed according to the energy arrival profiles. Recent work addresses this energy management problem for various energy harvesting communication settings \cite{tcom-submit, kaya_subm, ozel11, wless-submit, uysal_paper, finite, jing12jcn, kaya_jcn, Zhang_Relay, gunduz_camsap, orhanCISS, letaief, gunduz_jcn, kaya_subm2, orhan_itw, jiexu12}. When the energy management policies are optimized as in \cite{tcom-submit, kaya_subm, ozel11, wless-submit, uysal_paper, finite, jing12jcn, kaya_jcn, Zhang_Relay, gunduz_camsap, orhanCISS, letaief, gunduz_jcn, kaya_subm2, orhan_itw, jiexu12}, the resulting performance of the system depends only on the energy arrival profiles. In this paper, we introduce the notion of \textit{energy cooperation} in energy harvesting communications where users can share a portion of their harvested energy with the other users by means of wireless energy transfer \cite{Gurakan12isit, Gurakan12asilomar, Gurakan13icc}. This energy cooperation enables us to control and optimize the energy arrivals at users to the extent possible. In the classical setting of cooperation \cite{erkip03}, users help each other in the transmission of their data by exploiting the broadcast nature of wireless communications and the resulting overheard information. In contrast to the usual notion of cooperation, which is at the \textit{signal level}, energy cooperation we introduce here is at the \textit{battery energy level}. In a multi-user setting, energy may be abundant in one user in which case the loss incurred by transferring it to another user may be less than the gain it yields for the other user. It is this cooperation that we wish to explore in this paper for several basic multi-user scenarios, where energy can be transferred from one user to another through a separate wireless energy transfer unit.

Wireless energy transfer has been recently proposed as a promising technique for a wide variety of wireless networking applications \cite{Kaibin12sub, shi_infocom_11, doost10, Ferguson11, poon12commag, poon12transopt}. In future wireless networks, nodes are envisioned to be capable of harvesting energy from the environment and transferring energy to other nodes, rendering the network energy self-sufficient and self-sustaining with a significantly prolonged lifetime. Wireless energy transfer is a relatively new concept for wireless communications; however, it has been considered in other contexts earlier: Wireless powering of engineering systems by microwave power transfer technology has been used in many applications \cite{brown84aes, microwavespace, sahai11opt} for a long time, such as space missions \cite{microwavespace} and optical communications \cite{sahai11opt}. While microwave power transfer is viewed as the key technology for large-scale cellular networks \cite{Kaibin12sub}, recent advances in wireless energy transfer technology supports feasibility of wireless network design in smaller scales. In \cite{soljacic07Science, Soljacic08}, wireless energy transfer with strong inductive coupling has been demonstrated with relatively high efficiency over relatively long distances with small device sizes. Another related line of research in medical implanting applications has been presented in \cite{poon12transopt, poon12commag, Ferguson11} where wireless nodes are powered by wireless energy transfer, which also use the wirelessly transferred energy for communications. RFID technology is another prominent example along this direction, where nodes harvest received energy and use the harvested energy (via reflection) for communication \cite{Glover_RFID}. Relying on the possibility of efficient wireless energy transfer, in this paper, we investigate the optimum communication schemes in multi-user systems with nodes that have energy harvesting and energy transfer capabilities.

In communication systems with wireless energy transfer, energy and information flow simultaneously. Motivated by this nature of such systems, the trade-off between energy and information transmission has been addressed in several recent works \cite{grover, varshneyisit08, zhang11, varshneyisit12, Simeone12ITW, osvaldoCLT, Schober}. Among these works, the one that is most pertinent to our work is \cite{osvaldoCLT}, where multi-user communication systems with simultaneous energy and information transmission are studied. Our problem formulation captures a different trade-off than those studied in \cite{grover, varshneyisit08, zhang11, varshneyisit12, Simeone12ITW, osvaldoCLT, Schober} since in our model wireless energy transfer is maintained by a separate wireless energy transfer unit, and the harvested energy source is independent of the received signal energy.

In this paper, we study the offline optimal energy management problem for several basic multi-user network structures with energy harvesting transmitters and one-way wireless energy transfer. Offline throughput maximization problem has been recently investigated for various settings with energy harvesting transmitters in \cite{tcom-submit, kaya_subm, ozel11, wless-submit, uysal_paper, finite, jing12jcn, kaya_jcn, Zhang_Relay, gunduz_camsap, orhanCISS, letaief, gunduz_jcn, kaya_subm2, orhan_itw, jiexu12}. In \cite{tcom-submit}, transmission completion time minimization problem for an energy harvesting transmitter with an unlimited sized battery is solved, and this solution is extended to the case of a transmitter with a finite sized battery in \cite{kaya_subm} by showing its equivalence to a throughput maximization problem. References \cite{ozel11, wless-submit, uysal_paper, finite, jing12jcn, kaya_jcn} extend the throughput maximization problem and its solution to fading, broadcast, multiple access and interference channels. In \cite{Zhang_Relay, gunduz_camsap, orhanCISS, letaief}, the end-to-end throughput maximization problem is solved for two-hop cooperative relay networks for various settings. Extensions of the throughput maximization problem for nodes with battery imperfections are considered in \cite{gunduz_jcn, kaya_subm2}, and processing costs are incorporated in \cite{orhan_itw, jiexu12}.

As extensively emphasized in \cite{tcom-submit, kaya_subm, ozel11, wless-submit, uysal_paper, finite, jing12jcn, kaya_jcn, Zhang_Relay, gunduz_camsap, orhanCISS, letaief, gunduz_jcn, kaya_subm2, orhan_itw, jiexu12}, in energy harvesting transmitters, energy arrivals in time impose energy causality constraints on the transmission policies of the users. In the optimal policy, due to the concavity of the throughput in powers, energy needs to be allocated as constant as possible over time subject to energy causality constraints. In the presence of wireless energy transfer, energy causality constraints take a new form: energy can flow in time from the past to the future for each user, and from one user to the other at each time. This requires a careful joint management of energy flow in two separate dimensions, and different management policies are required depending on how users share the common wireless medium and interact over it. In this context, we analyze several basic multi-user energy harvesting network structures with wireless energy transfer. To capture the main trade-offs and insights that arise due to wireless energy transfer, we focus our attention on simple two- and three-user communication systems.

First, we examine additive Gaussian two-hop relay channel with one-way energy transfer from the source node to the relay node where the objective is to maximize the end-to-end throughput. Next, we consider the Gaussian two-way channel with one-way energy transfer, and the two-user Gaussian multiple access channel with one-way energy transfer. For these two channel models, we determine the two-dimensional simultaneously achievable throughput regions. For all three cases, we use a Lagrangian approach and determine the optimum transmit powers and energy transfer policies via the KKT optimality conditions. In particular, we develop a \textit{two-dimensional directional water-filling algorithm} which optimally controls the energy flow in time and among users. As observed in \cite{ozel11}, energy harvesting setting gives rise to a \textit{directional} water-filling algorithm, where energy can flow only from the past to the future due to the energy causality constraints. In addition, with wireless energy transfer, at any give time, energy can flow from one user to the other depending on the direction of wireless energy transfer. Therefore, the directionality of energy flow in two separate dimensions requires careful management of energy over time and users. Solutions obtained in each setting yield new insights on energy cooperation at the battery energy level in the presence of wireless energy transfer.

\section{Two-Hop Relay Channel with One-Way Energy Transfer}
\label{model}

In this section, we consider a two-hop relay channel consisting of a source node, a relay node and a destination node as shown in Fig.~\ref{sysmod1}. The two queues at the source and the relay nodes are the data and energy queues. The energies that arrive at the source and the relay nodes are saved in the corresponding energy queues. The data queue of the source always carries some data packets to be delivered to the destination. The data packets sent from the source node cause a depletion of energy from the source energy queue and an increase in the relay data queue. These data packets are then served out of the relay data queue with a cost of energy depletion from the relay energy queue. The relay operates in a full-duplex mode, i.e., the data and energy queues of the relay are updated simultaneously in every slot. We assume that the data and energy buffer sizes are unlimited. In addition, energy expenditure is only due to data transmissions; any other energy costs, e.g., processing, circuitry, are not considered in this paper. There is a separate wireless energy transfer unit at the source node, and therefore, the source node may wish to share a portion of its energy with the relay node so that the relay can forward more data.

\begin{figure}[t]
\begin{center}
\includegraphics[width=0.55\linewidth]{./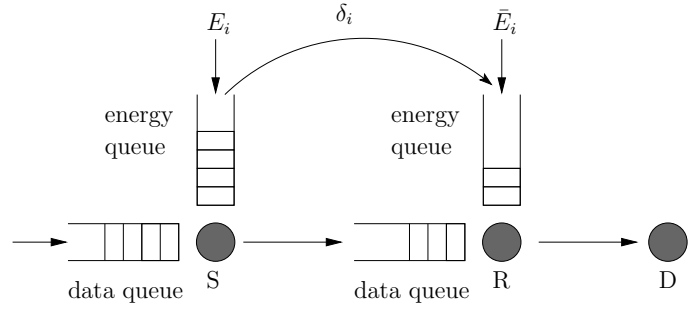}
\end{center}
\caption{Two-hop relay channel with energy harvesting source and relay nodes, and one-way energy transfer from the source node to the relay node.}
\label{sysmod1}
\vspace*{-0.1in}
\end{figure}

The channels from the source to the relay and from the relay to the destination are additive white Gaussian noise (AWGN) channels. The received signals $y_r$ and $y_d$ at the relay and the destination, respectively, are given by $y_r=\sqrt{h_s}x_s + n_s$ and $y_d=\sqrt{h_r}x_r + n_r$, where $h_s$ and $h_r$ are the channel coefficients for the source-to-relay and relay-to-destination channels, respectively. $n_s$ and $n_r$ are Gaussian noises each with zero-mean and unit-variance. We assume that $h_s = h_r = 1$ without loss of generality as otherwise the energy arrivals can be properly scaled.

Time is slotted and there are a total of $T$ equal length slots. Without loss of generality, we assume that the slots are of unit length. At times $t=1,\dots,T$, the source harvests energy with amounts $\es{1},\es{2},\dots,\es{T}$ and the relay harvests energy with amounts $\er{1},\er{2},\dots,\er{T}$. Energy transfer efficiency is $\alpha$, where $0 \leq \alpha \leq 1$. This means that when the source transfers $\ds{i}$ amount of energy to the relay through the wireless energy transfer unit in slot $i$, $\alpha \ds{i}$ amount of energy enters the energy queue of the relay in the next slot. Similarly, when the source uses power $\ps{i}$ for data transmission, the data queue of the relay is increased by $\frac{1}{2}\log\left(1+\ps{i}\right)$ bits in the next slot. The source and relay slots are indexed by one slot delay, so that, the slot subscripts are aligned at the source and the relay; see Fig.~\ref{slotmodel}. Power policy of the source is the sequences $\ps{i}$ and $\ds{i}$, and the power policy of the relay is the sequence $\pr{i}$.

As the energy that has not arrived yet cannot be used for data transmission or energy transfer, the power policies of the source and the relay are constrained by the causality of energy in time. These constraints yield the following feasible set:
\begin{align}
\mathcal{F} = \Big\{ (\deltav,\psource,\prelay): \ \ \sum_{i=1}^k \ps{i} \leq  \sum_{i=1}^k (\es{i} - \delta_i), \ \ \sum_{i=1}^k \pr{i} \leq  \sum_{i=1}^k (\er{i} + \alpha \delta_i), \ \ \sum_{i=1}^k \delta_i \leq \sum_{i=1}^k \es{i}, \ \ \forall k \label{feasset} \Big\}
\end{align}
where vectors $\psource$, $\prelay$ and $\deltav$ denote sequences $\ps{i}, \pr{i}$ and $\ds{i}$, respectively. $\mathcal{F}$ is the feasible set due to energy causality in harvested and transferred energies and is valid for the two-way and multiple access system models as well. For the two-hop relay channel model, we have an additional constraint: The relay transmits data that arrives from the source. Therefore, the power policies of the source and the relay need to satisfy the following data causality constraints at the relay:
\begin{align}
\sum_{i=1}^k \frac{1}{2} \log{(1+ \pr{i})} \leq \sum_{i=1}^k \frac{1}{2} \log{(1+ \ps{i})}, \quad k=1,\ldots, T
\end{align}
We formulate the end-to-end throughput maximization problem in the next section.

\begin{figure}[t]
\begin{center}
\includegraphics[width=0.45\linewidth]{./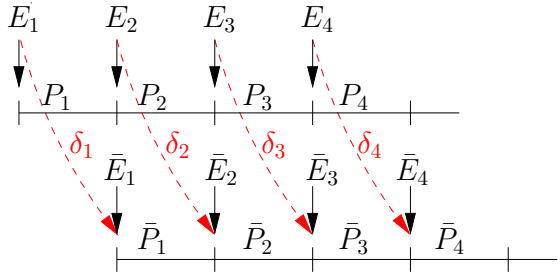}
\end{center}
\caption{Slotted system model: The queues of the relay are updated with one slot delay with respect to the queues of the source so that the slot indices are aligned.}
\label{slotmodel}
\vspace*{-0.1in}
\end{figure}

\section{End-to-end Throughput Maximization for the Relay Channel}

The optimal offline end-to-end throughput maximization problem with wireless energy transfer subject to energy causality at both nodes and data causality at the relay node is:
\begin{eqnarray}
\nonumber \max_{\pr{i}, \, \ps{i}, \, \ds{i}} & & \sum_{i=1}^T \frac{1}{2} \log{(1+\pr{i})} \\
\nonumber \mbox{s.t.} & & \sum_{i=1}^k \frac{1}{2} \log{(1+ \pr{i})} \leq \sum_{i=1}^k \frac{1}{2} \log{(1+ \ps{i})},
\quad k = 1,\dots,T \\ & & (\deltav,\psource,\prelay) \in \mathcal{F} \label{opt_probrelay1}
\end{eqnarray}
It can be shown that (\ref{opt_probrelay1}) is equivalent to a convex optimization problem (see \cite{Gurakan12isit}), by a change of variables from $\pr{i}, \ps{i}, \ds{i}$ to $\rr{i}=\frac{1}{2}\log\left(1+\pr{i}\right), \rs{i}=\frac{1}{2}\log\left(1+\ps{i}\right), \ds{i}$.  Thus, (\ref{opt_probrelay1}) can be solved using standard techniques \cite{boyd}. The Lagrangian function for the problem in (\ref{opt_probrelay1}) is:
\begin{align}
\nonumber \mathcal{L} =&- \sum_{i=1}^T \log{(1+\pr{i})} + \sum_{k=1}^T \lambda_k \left(\sum_{i=1}^k \log{(1+\pr{i})} - \sum_{i=1}^k \log{(1+\ps{i})} \right)	\\
&+ \sum_{k=1}^T \mu_k \left(\sum_{i=1}^k \ps{i} - (\es{i} - \delta_i) \right) + \sum_{k=1}^T \eta_k \left(\sum_{i=1}^k \pr{i} - (\er{i} + \alpha \delta_i) \right) - \sum_{k=1}^T \rho_k \delta_k \label{lagrangian_relaytwc}
\end{align}
Lagrange multiplier $\rho_k$ is due to the constraint that $\ds{k} \geq 0$. Note that the same constraints apply for $\ps{i}$ and $\pr{i}$; however, $\ps{i}$ and $\pr{i}$ are always non-zero in the optimal policy, therefore we exclude them. Similarly, we exclude the constraints $\sum_{i=1}^k \delta_i \leq \sum_{i=1}^k \es{i}$ in the Lagrangian function as these constraints can never be satisfied with equality in the optimal policy. This is because, if the source transfers all of its arrived energy at some slot, then the relay will have no data to send at that slot, and allocating a small portion of the arrived energy for source transmission strictly increases the throughput in this case. The KKT conditions for this problem are:
\begin{align}
\frac{-1 + \sum_{k=i}^T \lambda_k}{1 + \pr{i}} + \sum_{k=i}^T \eta_k &= 0, \quad \forall i\label{KKT_relay}\\
\frac{-\sum_{k=i}^T \lambda_k}{1 + \ps{i}} + \sum_{k=i}^T \mu_k &= 0,  \quad \forall i\label{KKT_source} \\
\sum_{k=i}^T \mu_k - \alpha \sum_{k=i}^T \eta_k - \rho_i &= 0,  \quad \forall i\label{KKT_energy}
\end{align}
with the additional complementary slackness conditions as:
\begin{align}
\lambda_k  \left(\sum_{i=1}^k \log{(1+\pr{i})}  - \sum_{i=1}^k \log{(1+\ps{i})} \right) &	= 0, \quad \forall k \\
\mu_k  \left(\sum_{i=1}^k \ps{i} - (\es{i} - \ds{i}) \right) & = 0, \quad \forall k \\
\eta_k  \left(\sum_{i=1}^k \pr{i} - (\er{i} + \alpha \ds{i}) \right) & = 0, \quad \forall k \\
\rho_k \ds{k} & = 0, \quad \forall k \label{comp_slack}
\end{align}
From (\ref{KKT_relay}), (\ref{KKT_source}) and (\ref{KKT_energy}) we get:
\begin{align}
\pr{i} =&   \frac{1 - \sum_{k=i}^T \lambda_k}{\sum_{k=i}^T \eta_k} -1, \quad \forall i\label{relay_soln} \\
\ps{i} =&  \frac{\sum_{k=i}^T \lambda_k}{\sum_{k=i}^T \mu_k} -1, \quad \forall i\label{source_soln} \\
\rho_i =& \sum_{k=i}^T \mu_k - \alpha \sum_{k=i}^T \eta_k, \quad \forall i\label{transfer_soln}
\end{align}
Next, we obtain necessary optimality conditions for (\ref{opt_probrelay1}).

\subsection{Necessary Optimality Conditions}

The first necessary optimality condition for (\ref{opt_probrelay1}) is that the source has to send as many bits as the relay can send and the relay has to finish up all the data in its data buffer. In other words, in the optimal policy, no data should be left in the data queue of the relay at the end.
\begin{lemma}
\label{opt_cond1}
The optimal power sequences $\ops{i}$, $\opr{i}$ must satisfy the constraint $\sum_{i=1}^T \frac{1}{2} \log(1+\opr{i}) = \sum_{i=1}^T \frac{1}{2} \log(1+\ops{i})$.
\end{lemma}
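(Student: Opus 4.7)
The plan is to argue by contradiction: assume strict slack $\sum_{i=1}^T \tfrac{1}{2}\log(1+\opr{i}) < \sum_{i=1}^T \tfrac{1}{2}\log(1+\ops{i})$ and exhibit a feasible perturbation of the optimum that strictly raises the objective. Intuitively, under strict slack the source is loading bits into the relay's data queue faster than the relay empties it, so diverting a small amount of source energy from transmission into wireless transfer should boost the relay's throughput without violating data causality.

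Concretely, I would first pick the largest index $k^*$ with $\ops{k^*}>0$, which must exist because the strict inequality forces $\sum_i \log(1+\ops{i})>0$. A key preparatory observation is that the data-causality constraint then has strictly positive slack at every $k\ge k^*$: for such $k$ the right-hand side equals $\sum_{i=1}^{k^*}\log(1+\ops{i})$ since $\ops{j}=0$ for $j>k^*$, whereas the left-hand side is bounded by $\sum_{i=1}^T\log(1+\opr{i})$, which is strictly below that quantity by hypothesis. Then I would perturb the optimum by reducing $\ops{k^*}$ by $\epsilon$, increasing $\ods{k^*}$ by $\epsilon$, and increasing $\opr{k^*}$ by $\alpha\epsilon$, for sufficiently small $\epsilon>0$, and verify feasibility constraint by constraint. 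Source energy causality is preserved because the $-\epsilon$ change in $\ps{k^*}$ cancels the $+\epsilon$ change in $\delta_{k^*}$ inside $\sum_i(\ps{i}+\delta_i)$. The constraint $\sum_{i=1}^k\delta_i\le\sum_{i=1}^k\es{i}$ absorbs the extra $\epsilon$ thanks to the strict slack already noted in the paper just before the KKT derivation. Every relay energy constraint at $k\ge k^*$ gains $+\alpha\epsilon$ on both sides and is preserved, and the data-causality constraints at $k\ge k^*$ survive for small $\epsilon$ by continuity, using the slack just established. No constraint at an index $k<k^*$ is touched. The objective therefore strictly increases by $\tfrac{1}{2}\log(1+\opr{k^*}+\alpha\epsilon)-\tfrac{1}{2}\log(1+\opr{k^*})>0$, contradicting optimality.

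The main obstacle is the edge case $\ops{T}=0$, where one cannot directly shrink the final-slot source power; locating the last source-active slot $k^*$ and proving that the strict slack at $T$ propagates back to every $k\in[k^*,T]$ is what makes the perturbation legal. I am implicitly assuming $\alpha>0$, since if $\alpha=0$ wireless energy transfer is useless; in that degenerate case the lemma reduces to the standard observation that an optimal policy should not leave undelivered bits in the relay's buffer at time $T$.
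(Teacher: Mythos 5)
Your proposal is correct and takes essentially the same approach as the paper: the paper's proof is a one-sentence version of the very same contradiction argument (under strict slack, increase some $\delta_i$, increase $\bar{P}_i$, and decrease $P_i$ to strictly improve the objective), and you have simply supplied the feasibility bookkeeping, the choice of the last source-active slot $k^*$, and the propagation of the slack in the data-causality constraints. Your explicit caveat that $\alpha>0$ is needed for the strict improvement is a fair observation that the paper leaves implicit.
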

\begin{proof}
Suppose the stated constraint is satisfied with strict inequality. Then, we can increase some
$\ds{i}$, increase $\pr{i}$ and decrease $\ps{i}$ without violating the energy constraints and improve the overall throughput which contradicts the optimality of $\opr{i}$, $\ops{i}$, $\ods{i}$.
\end{proof}

We note that if the relay energy profile is sufficient to forward all the bits in the optimal source data stream with respect to the source energy profile, that is, if the separable policy in \cite{Zhang_Relay,gunduz_camsap} yields a policy that satisfies the necessary condition in Lemma \ref{opt_cond1}, then it is the optimal solution for (\ref{opt_probrelay1}) and no energy transfer is needed.

The second observation about the optimal policy is that the source has to exhaust the energies that have been harvested throughout the communication session either for data transmission or in the form of wireless energy transfer.
\begin{lemma}
\label{opt_cond2}
The optimal power profiles $\ops{i}$, $\opr{i}$ and energy transfers $\ods{i}$ must satisfy  $\sum_{i=1}^T \ops{i} = \sum_{i=1}^T (\es{i} - \ods{i})$.
\end{lemma}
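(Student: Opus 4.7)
The plan is to prove this by contradiction via a local perturbation, in the same spirit as the proof of Lemma~\ref{opt_cond1}. Suppose instead that $\sum_{i=1}^T \ops{i} < \sum_{i=1}^T(\es{i} - \ods{i})$, and set
\[
\epsilon \;:=\; \sum_{i=1}^T(\es{i} - \ods{i}) \;-\; \sum_{i=1}^T \ops{i} \;>\; 0.
\]
This slack lives entirely in the source's terminal ($k=T$) energy causality constraint; the earlier cumulative source constraints may or may not be tight, but the difference at $k = T$ is exactly $\epsilon$.

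Next I would exhibit a feasible perturbation that strictly improves the objective. Fix any split $\epsilon = \epsilon_1 + \epsilon_2$ with $\epsilon_1, \epsilon_2 > 0$ and modify only the last slot: replace $\ops{T}$ by $\ops{T}+\epsilon_1$, $\ods{T}$ by $\ods{T}+\epsilon_2$, and $\opr{T}$ by $\opr{T}+\delta$ for a small $\delta>0$ to be chosen. Since $\ops{T}$, $\opr{T}$ and $\ods{T}$ appear in the cumulative causality constraints only from $k=T$ onward, no constraint with $k<T$ is affected. At $k=T$, the source energy constraint is exactly saturated because $\epsilon_1+\epsilon_2=\epsilon$, while the relay energy constraint at $k=T$ retains slack $\alpha\epsilon_2-\delta\geq 0$ whenever $\delta\leq \alpha\epsilon_2$.

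The only remaining point is the relay's data causality constraint at $k=T$. By Lemma~\ref{opt_cond1} this constraint is tight at the optimum, so after perturbation feasibility reduces to
\[
\tfrac{1}{2}\log(1+\opr{T}+\delta) - \tfrac{1}{2}\log(1+\opr{T}) \;\leq\; \tfrac{1}{2}\log(1+\ops{T}+\epsilon_1) - \tfrac{1}{2}\log(1+\ops{T}).
\]
The right-hand side is a fixed positive number (it depends only on $\epsilon_1>0$), while the left-hand side vanishes as $\delta\to 0^+$. Hence for all sufficiently small $\delta\in(0,\alpha\epsilon_2]$ the perturbation is feasible, and it strictly increases the throughput by $\tfrac{1}{2}\log(1+\opr{T}+\delta)-\tfrac{1}{2}\log(1+\opr{T})>0$, contradicting the optimality of $(\ops{i},\opr{i},\ods{i})$.

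The main delicacy, and the reason a one-step argument is not enough, is that neither half of the split can be set to zero: with $\epsilon_2=0$ the relay receives no additional energy and the objective does not move, while with $\epsilon_1=0$ the $k=T$ data causality constraint has no room to absorb a positive $\delta$ (by Lemma~\ref{opt_cond1} it was already tight). Thus Lemma~\ref{opt_cond1} is exactly what upgrades a naive one-sided perturbation (increasing only $\ods{T}$, or only $\ops{T}$) into the two-sided one described above, and I expect the only nontrivial bookkeeping to be the simultaneous verification of all three constraint types for the chosen $\delta$.
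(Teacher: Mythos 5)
Your proof is correct and follows essentially the same route as the paper's: the paper's one-line argument likewise perturbs the optimal policy by increasing some $\ds{i}$ together with $\ps{i}$ and $\pr{i}$ to absorb the slack, and your version just makes the split $\epsilon=\epsilon_1+\epsilon_2$, the restriction to slot $T$, and the choice of sufficiently small $\delta$ explicit. The only implicit assumption you share with the paper is $\alpha>0$, without which the transferred portion yields no relay-side gain.
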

\begin{proof}
Suppose this constraint is satisfied with strict inequality. Then, we can increase some $\ds{i}$ and then increase $\ps{i}$ and $\pr{i}$ to achieve a larger throughput and satisfy the constraints of (\ref{opt_probrelay1}). This contradicts the optimality of $\ops{i},\opr{i}, \ods{i}$.\end{proof}

Next, we observe that if there is a non-zero energy transfer from the source to the relay, then the relay has to exhaust all of its energy in the optimal policy.
\begin{lemma}
\label{opt_cond3}
For the optimal power sequences $\ops{i}$, $\opr{i}$ and energy transfer sequence $\ods{i}$, if $\ods{i} \neq 0$ for some $i$, then $\sum_{i=1}^T \opr{i} = \sum_{i=1}^T (\er{i} + \alpha \ods{i})$.
\end{lemma}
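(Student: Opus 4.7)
The plan is proof by contradiction: suppose $\ods{i_0} > 0$ for some $i_0$ while $A_T := \sum_{i=1}^T (\er{i} + \alpha \ods{i}) - \sum_{i=1}^T \opr{i} > 0$, and construct a feasible perturbation of the optimum with strictly larger throughput. Intuitively, any transferred energy is dissipated by a factor $1-\alpha$, so if the relay does not exhaust its energy by time $T$, some transfer should have been kept at the source, and the resulting extra source throughput should let the relay (which still has surplus) transmit more.

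For small $\epsilon, \Delta > 0$, the perturbation simultaneously decreases $\ods{i_0}$ by $\epsilon$, raises $\ops{i_0}$ by $\epsilon$, and raises $\opr{l}$ by $\Delta$ for an index $l \geq i_0$ chosen below. The first two moves leave $\es{i_0} - \ds{i_0}$ unchanged, so every source energy causality constraint is preserved with its original margin, and $\tfrac{1}{2}\log(1+\ops{i_0})$ strictly increases, which relaxes the data causality constraint at every $k \geq i_0$. I would pick $l \geq i_0$ so that the relay slack $A_k$ is strictly positive for every $k \in [l, T]$; this is possible because $A_T > 0$ (at worst $l = T$). Then, taking $\epsilon, \Delta$ small enough that $\alpha\epsilon + \Delta < \min_{k \in [l, T]} A_k$ preserves relay energy causality at $k \geq l$, while the slots $k \in [i_0, l-1]$ absorb the $\alpha\epsilon$ drop in RHS for small $\epsilon$. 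Scaling $\Delta$ so that $\tfrac{1}{2}\log\frac{1+\opr{l}+\Delta}{1+\opr{l}} \leq \tfrac{1}{2}\log\frac{1+\ops{i_0}+\epsilon}{1+\ops{i_0}}$ preserves data causality at $k = T$ by Lemma~\ref{opt_cond1}, and hence at every intermediate $k$ as well, since the LHS increment appears only from $k \geq l$ while the RHS increment is present from $k \geq i_0$. The objective then gains $\tfrac{1}{2}\log\frac{1+\opr{l}+\Delta}{1+\opr{l}} > 0$, contradicting optimality.

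The main obstacle is the possibility that some relay energy constraint is tight at a slot $k \in [i_0, l-1]$, in which case the $\alpha\epsilon$ reduction in RHS at that slot would be infeasible. To side-step this, I would choose $i_0$ itself inside the rightmost subinterval $(k^\ast, T]$ on which $A_k > 0$ for every $k$; such a tail exists because $A_T > 0$. If no positive transfer lies in that tail, a preliminary swap that moves an $\epsilon$-mass of transfer from an earlier (tight) slot to a later (slack) slot in $(k^\ast, T]$, together with a compensating decrease of $\opr{j}$ at an early $j$ and matching increase of $\opr{l}$ at a late $l$, preserves all constraints and the objective while placing a positive transfer inside the slack tail, reducing to the situation already analyzed.
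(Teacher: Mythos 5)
Your overall strategy is the same as the paper's: assume $\ods{i_0}>0$ while the relay's terminal energy constraint has slack, then build a feasible perturbation (reduce the transfer, keep that energy at the source to raise $\ops{i_0}$ and hence the source rate, and let the relay spend some of its surplus) that strictly increases the objective. The paper disposes of this in one line by analogy with Lemma~\ref{opt_cond2}; you correctly identify the real technical obstacle it glosses over, namely that decreasing $\ods{i_0}$ lowers the relay's cumulative energy budget by $\alpha\epsilon$ at every $k\geq i_0$, which is infeasible if some intermediate relay energy constraint is tight. Your handling of the easy case (a positive transfer already lies in the rightmost tail where all relay constraints are slack) is complete and correct, including the data-causality bookkeeping via Lemma~\ref{opt_cond1}.

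The one step that fails as literally stated is in your fallback: you claim the preliminary swap --- moving $\epsilon$ of transfer to a later slot while decreasing $\opr{j}$ at an early $j$ and \emph{matching} the increase of $\opr{l}$ at a late $l$ --- ``preserves all constraints and the objective.'' A power-preserving shift of relay power between slots does not preserve $\sum_i \tfrac{1}{2}\log(1+\pr{i})$ unless $\opr{j}=\opr{l}$; if it strictly decreases the objective, your subsequent $O(\epsilon)$ improvement need not beat the original optimum, and if the shift is calibrated the wrong way it can also violate data causality at $k\geq l$. The repair is to make the shift \emph{rate}-preserving instead: decrease $\opr{j}$ by $\alpha\epsilon$ (a valid choice of $j$ with $\opr{j}>0$ before the first tight relay slot exists, since the tight constraint there forces positive cumulative relay power) and increase $\opr{l}$ by the $\Delta'$ solving $\log\frac{1+\opr{l}+\Delta'}{1+\opr{l}}=\log\frac{1+\opr{j}}{1+\opr{j}-\alpha\epsilon}$. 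This keeps the objective and the terminal data-causality equality exact, only relaxes intermediate data-causality and relay-energy constraints except for an $O(\epsilon)$ draw on the strictly positive slack in the tail, and lands you in the already-solved case. With that substitution your argument is complete --- and, unlike the paper's, actually addresses the intermediate-constraint issue.
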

\begin{proof}
Suppose this constraint is satisfied with strict inequality. Using a similar argument as in Lemma \ref{opt_cond2}, we can decrease $\ds{i}$ and increase $\pr{i}$ to achieve a larger throughput and satisfy the constraints of problem (\ref{opt_probrelay1}). This contradicts the optimality of $\ops{i},\opr{i}, \ods{i}$.
\end{proof}

Finally, we note that, in the optimal policy, the total energy expenditure at the relay must be higher than the total energy expenditure at the source.
\begin{lemma}
\label{opt_cond4}
The optimal power sequences $\ops{i}$ and $\opr{i}$ must satisfy $\sum_{i=1}^T \ops{i} \leq \sum_{i=1}^T \opr{i}$, and with equality if and only if $\ops{i}=\opr{i}$ for all $i$.
\end{lemma}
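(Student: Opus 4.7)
My plan is to rephrase the inequality $\sum\ops{i}\le\sum\opr{i}$ in terms of the per-slot rates $a_i=\log(1+\ops{i})$ and $b_i=\log(1+\opr{i})$. Lemma~\ref{opt_cond1} gives $\sum_{i=1}^T a_i=\sum_{i=1}^T b_i$, and the data-causality constraints in (\ref{opt_probrelay1}) give $\sum_{i=1}^k b_i\le\sum_{i=1}^k a_i$ for every $k$. Since $\ops{i}=e^{a_i}-1$ and $\opr{i}=e^{b_i}-1$, the desired inequality becomes $\sum_i e^{a_i}\le\sum_i e^{b_i}$, an inequality between sums of the strictly convex function $x\mapsto e^x$. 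This is the classical setting of Hardy--Littlewood--P\'olya: if I can show that $(b_i)$ majorizes $(a_i)$, then Schur convexity of $\sum_i e^{x_i}$ simultaneously delivers the inequality and pins down the equality case.

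To bridge from data causality to majorization, I would show that both optimal sequences are non-decreasing in $i$. For the relay this is immediate from (\ref{relay_soln}): writing $1+\opr{i}=(1-\sum_{k\ge i}\lambda_k)/\sum_{k\ge i}\eta_k$, the numerator is non-decreasing and the denominator is non-increasing in $i$ (both are tail sums of non-negative multipliers), and both are positive since $\opr{i}\ge 0$, so $1+\opr{i}$ is non-decreasing. For the source, Lemma~\ref{opt_cond2} fixes the total source energy at $\sum_i(\es{i}-\ods{i})$, so the only remaining freedom is how to allocate this total across slots subject to its energy causality $\sum_{i=1}^k\ops{i}\le\sum_{i=1}^k(\es{i}-\ods{i})$. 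I would argue that optimality forces the source to water-fill on the effective arrivals $\es{i}-\ods{i}$: if $\ops{j}>\ops{j+1}$, the smoothing $(\ops{j},\ops{j+1})\mapsto(\ops{j}-\epsilon,\ops{j+1}+\epsilon)$ strictly increases $\sum_i\log(1+\ops{i})$ while respecting source energy causality, and coupling it with a commensurate shift in $\ods{j},\ods{j+1}$ (or, when data causality at slot $j$ is tight, in $\opr{j},\opr{j+1}$ using that $\opr{i}$ is non-decreasing) yields a strictly higher-throughput feasible policy, contradicting optimality. The water-filling solution is known to be non-decreasing.

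With both $(a_i)$ and $(b_i)$ non-decreasing, the sum of the $k$ smallest entries of each sequence is just its first-$k$ partial sum, so the data-causality inequalities $\sum_{i=1}^k b_i\le\sum_{i=1}^k a_i$ for all $k$, together with equality at $k=T$ from Lemma~\ref{opt_cond1}, are precisely the Hardy--Littlewood--P\'olya conditions for $(b_i)$ to majorize $(a_i)$. Schur convexity applied to the strictly convex $x\mapsto e^x$ then gives $\sum_i e^{b_i}\ge\sum_i e^{a_i}$, which is the target $\sum_i\opr{i}\ge\sum_i\ops{i}$; equality forces $(a_i)$ and $(b_i)$ to agree as multisets, and the common non-decreasing ordering then forces $a_i=b_i$, i.e., $\ops{i}=\opr{i}$, for every $i$. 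The main technical obstacle is the source monotonicity step: the source KKT (\ref{source_soln}) writes $1+\ops{i}=A_i/C_i$ as a ratio of two non-increasing tail sums, which is not syntactically monotone, so this piece must be extracted from the water-filling / exchange argument rather than read off directly from the KKT conditions.
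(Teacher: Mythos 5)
Your overall route is the same as the paper's: pass to the rate vectors, use Lemma~\ref{opt_cond1} for the total-sum equality and the data-causality constraints for the partial-sum inequalities, conclude that the relay rate vector majorizes the source rate vector, and invoke strict Schur convexity of the sum of a strictly convex function of the rates to get both the power inequality and the equality case. The only substantive difference is how you obtain the monotonicity of the two rate sequences, which is the step that turns the ordered partial-sum inequalities into genuine majorization: the paper simply imports it from \cite[Lemmas 1 and 4]{tcom-submit}, whereas you try to derive it inline. Your relay-side derivation from (\ref{relay_soln}) (non-decreasing numerator over non-increasing positive denominator) is fine.

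The source-side exchange argument, however, does not close as written. The objective of (\ref{opt_probrelay1}) is $\sum_i \frac{1}{2}\log(1+\pr{i})$, so strictly increasing $\sum_i\log(1+\ps{i})$ by smoothing $(\ops{j},\ops{j+1})$ does not by itself contradict optimality: it only creates slack in the data-causality constraints, and the relay can exploit that slack only if it has unused energy, which need not be the case (by Lemma~\ref{opt_cond1} the relay already forwards every source bit at the optimum). The perturbation therefore produces another feasible policy with the \emph{same} objective value, which shows at most that \emph{some} optimal policy has a non-decreasing source power sequence, not that every one does. Since the lemma is asserted for the optimal sequences themselves, and its ``equality iff'' clause in particular is sensitive to which optimizer you pick, you need either the citation the paper uses or a genuine structural/uniqueness argument for the source powers (for instance, that the source's optimal rates solve a single-user throughput problem against the effective arrivals $\es{i}-\ods{i}$ with the relay's cumulative demands as deadlines, whose solution is known to be monotone). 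Apart from this one step, which you correctly flag as the technical obstacle, the proof matches the paper's.
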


\begin{proof}
We will give a proof based on majorization theory and Schur convexity \cite{majorbook}. We denote the optimal source and relay rate allocation vectors as $\rsource = [\ors{1}, \dots ,\ors{T}]$ and $\rrelay = [\orr{1}, \dots ,\orr{T}]$, where $\ors{i} = \frac{1}{2} \log{(1 + \ops{i})}$ and $\orr{i} = \frac{1}{2} \log{(1 + \opr{i})}$, for $i = 1,\dots,T$. First, we note that the optimal rate allocations of both the source and the relay are monotone non-decreasing sequences by \cite[Lemmas 1 and 4]{tcom-submit}, i.e., $\ors{i} \leq \ors{i+1}$ and $\orr{i} \leq \orr{i+1}$, for $i=1,\dots,T$. Second, we note the data causality constraint at the relay $\sum_{i=1}^k \orr{i} \leq \sum_{i=1}^k \ors{i}$, for all $k < T$, and the equality $\sum_{i=1}^T \orr{i} = \sum_{i=1}^T \ors{i}$ by Lemma~\ref{opt_cond1}. These imply that $\rsource$ is majorized by $\rrelay$, which is denoted by $\rsource \preceq \rrelay$; see \cite[Definition 1.A.1]{majorbook}. Since $\ops{i} = 2^{2 \ors{i}}-1$ and $g(x)=2^{2x}-1$ is strictly convex, $\sum_{i=1}^T \ops{i} = \sum_{i=1}^T 2^{2 \ors{i}}-1$ is a strictly Schur convex function of $\rsource$  \cite[Proposition 3.C.1]{majorbook}. Then, since $\rsource \preceq \rrelay$, we have that $\sum_{i=1}^T \ops{i} = \sum_{i=1}^T 2^{2 \ors{i}}-1 \leq \sum_{i=1}^T 2^{2 \orr{i}}-1 = \sum_{i=1}^T \opr{i}$ \cite[Proposition 4.B.1]{majorbook}. Moreover, due to the strict convexity of $g(x)$, and the resulting strict Schur convexity, equality is possible only when $\ors{i}=\orr{i}$ for all $i$.
\end{proof}

An immediate application of Lemma \ref{opt_cond4} is that if $\sum_{i=1}^T \er{i} < \sum_{i=1}^T \es{i}$, i.e., if the total energy of the relay is less than the total energy of the source, then the relay cannot forward the source data stream only with its own energy. In this case, we must have $\ods{i} \neq 0$ for some $i$, i.e., some energy transfer is strictly needed. We state this in the following lemma.
\begin{lemma}
\label{opt_cond5}
If the data buffer of the relay is empty at some slot $k$, $k \leq T$, then $\sum_{i=1}^k \ops{i} \leq \sum_{i=1}^k \opr{i}$, and with equality only when $\ops{i}=\opr{i}$ for all $i=1,\ldots,k$.
\end{lemma}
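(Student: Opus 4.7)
The plan is to mirror the majorization-based argument of Lemma \ref{opt_cond4}, but applied to the truncated horizon consisting of only the first $k$ slots. The key new hypothesis is that the data buffer of the relay is empty at slot $k$, which translates directly (through the definitions $\ors{i}=\tfrac12\log(1+\ops{i})$ and $\orr{i}=\tfrac12\log(1+\opr{i})$) into the equality $\sum_{i=1}^k \orr{i} = \sum_{i=1}^k \ors{i}$, i.e., the data causality constraint at the relay is tight at index $k$.

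First, I would collect three facts about the truncated rate vectors $\rsource^{(k)}=[\ors{1},\dots,\ors{k}]$ and $\rrelay^{(k)}=[\orr{1},\dots,\orr{k}]$. (i) Monotonicity: the optimal source and relay rates are non-decreasing over the entire horizon by \cite[Lemmas 1 and 4]{tcom-submit}, so this monotonicity is automatically inherited by any prefix, including the first $k$ entries. (ii) Partial-sum dominance: the data causality constraints $\sum_{i=1}^j \orr{i} \leq \sum_{i=1}^j \ors{i}$ hold for every $j=1,\dots,k-1$. (iii) Equality of totals: $\sum_{i=1}^k \orr{i} = \sum_{i=1}^k \ors{i}$, exactly by the empty-buffer hypothesis. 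Taken together, (i)--(iii) are precisely the conditions of \cite[Definition 1.A.1]{majorbook} for the majorization $\rsource^{(k)} \preceq \rrelay^{(k)}$ (using monotonicity to align the sorted and original orderings).

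Next, I would push this majorization through the strictly convex map $g(x)=2^{2x}-1$. Since $\sum_{i=1}^k g(\ors{i}) = \sum_{i=1}^k \ops{i}$ and $\sum_{i=1}^k g(\orr{i}) = \sum_{i=1}^k \opr{i}$, and since a sum $\sum g(x_i)$ of a strictly convex function is a strictly Schur-convex function of $(x_1,\dots,x_k)$ by \cite[Proposition 3.C.1]{majorbook}, Schur convexity combined with $\rsource^{(k)} \preceq \rrelay^{(k)}$ yields $\sum_{i=1}^k \ops{i} \leq \sum_{i=1}^k \opr{i}$ via \cite[Proposition 4.B.1]{majorbook}. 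Strict Schur convexity then forces the equality case: equality in $\sum g(\ors{i}) = \sum g(\orr{i})$ is only possible when the two vectors coincide up to a permutation; monotonicity of both sequences rules out any non-trivial permutation, so $\ors{i}=\orr{i}$, and hence $\ops{i}=\opr{i}$, for every $i=1,\dots,k$.

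The only subtlety I anticipate is making sure the monotonicity argument of \cite[Lemmas 1 and 4]{tcom-submit} indeed survives the presence of wireless energy transfer, so that it is legitimate to invoke on the prefix of length $k$; this is already taken for granted in Lemma \ref{opt_cond4}, so the same invocation is valid here. Beyond that, the proof is a direct specialization of the argument of Lemma \ref{opt_cond4} to the time window $\{1,\dots,k\}$, using the empty-buffer condition in place of the terminal equality that was supplied by Lemma \ref{opt_cond1} in the full-horizon case.
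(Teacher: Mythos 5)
Your proof is correct and follows essentially the same route as the paper: truncate the majorization argument of Lemma~\ref{opt_cond4} to the first $k$ slots, use the empty-buffer condition to supply the tight partial sum at index $k$, and invoke strict Schur convexity of $\sum_i \left(2^{2r_i}-1\right)$ for both the inequality and the equality case. No gaps.
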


\begin{proof}
If the data buffer of the relay is empty at some slot $k$, $k \leq T$, then we must have $\sum_{i=1}^k \ors{i} = \sum_{i=1}^k \orr{i}$. Together with the data causality constraints at the relay $\sum_{i=1}^{\tilde{k}} \orr{i} \leq \sum_{i=1}^{\tilde{k}} \ors{i}$, for $\tilde{k}=1,\ldots,k-1$, we conclude that the subvector $\rsource_{k} = [\ors{1}, \dots, \ors{k}]$ is majorized by the subvector $\rrelay_{k} = [\ors{1}, \dots, \ors{k}]$, i.e., $\rsource_k \preceq \rrelay_k$. Then, $\sum_{i=1}^k \ops{i} = \sum_{i=1}^k 2^{2 \ors{i}}-1  \leq \sum_{i=1}^k 2^{2 \orr{i}}-1 = \sum_{i=1}^k \opr{i}$, and with equality iff $\rsource_k = \rrelay_k$ due to the strict Schur convexity.
\end{proof}

Necessary conditions in Lemmas~\ref{opt_cond1} through \ref{opt_cond5} do not provide detailed structural properties for the optimal policy for an algorithmic solution. In the next sections, we consider specific scenarios to gain insight on the optimal policy. In particular, we examine cases that correspond to practically interesting settings, such as the case of only one of the nodes harvesting energy.

\subsection{Specific Scenario: Relay Energy Higher at the Beginning Lower at the End}

We consider the scenario where the relay energy arrival profile is higher at the beginning, intersects the energy arrival profile of the source once, and remains lower until the end of the communication, as shown in Fig.~\ref{spec1}. In particular, we assume that there exists $\tilde{i} \in [0, T]$ such that $\sum_{k=1}^i \er{k} \geq \sum_{k=1}^i \es{k}$, for all $i=1,\ldots,\tilde{i}$, and $\sum_{k=1}^i \er{k} \leq \sum_{k=1}^i \es{k}$, for all $i=\tilde{i}+1,\ldots,T$. In Fig.~\ref{spec1}, $\tilde{i} = 3$. We note that this case also covers the setting where the relay is not energy harvesting, and only the source harvests energy during the communication session.

For this case, we propose the following solution. Form a new energy arrival profile as: $\min\{ \sum_{k=1}^i\frac{ \er{k} + \alpha\es{k}}{\alpha +1}, \sum_{k=1}^i\es{k}\}$ as shown in Fig.~\ref{spec1}, and maximize the throughput with respect to this profile. In particular, use $\sum_{k=1}^i\es{k}$ for $i=1,\ldots,\tilde{i}$, and $\sum_{k=1}^i\frac{\er{k} + \alpha \es{k}}{\alpha +1}$ for $i=\tilde{i} + 1,\ldots,T$; and perform energy transfer only at slots $\tilde{i}+1,\ldots,T$. The resulting power sequences are matched for the source and the relay. More specifically, we propose
\begin{align}
\ops{i} = \opr{i} = \frac{ \min \left\{ \frac{ \sum_{j=n_{i-1}}^{n_i} \er{j} + \alpha\es{j}}{\alpha +1}, \sum_{j=n_{i-1}}^{n_i}\es{j}\right\} }{n_i - n_{i-1}} \label{prop1}
\end{align}
where
\begin{align}
n_{i} = \arg\min_{n_{i-1}\leq k \leq T} \left\{ \frac{ \min\{\sum_{j=n_{i-1}}^{k} \frac{ \er{j} + \alpha\es{j}}{\alpha +1},\sum_{j=n_{i-1}}^{k}\es{j}\}}{k - n_{i-1}} \right\}
\label{prop2}
\end{align}

We next show that there exist $\lambda_i, \mu_i, \eta_i, \rho_i \geq 0$ that satisfy (\ref{KKT_relay})-(\ref{comp_slack}) and yield the solution in (\ref{prop1})-(\ref{prop2}) via (\ref{relay_soln})-(\ref{transfer_soln}). In particular, $\rho_i = 0$ and $\eta_i = \frac{\mu_i}{\alpha}$ for $i=\tilde{i}+1,\ldots,T$. Since $\alpha \sum_{k=i}^T \eta_k = \sum_{k=i}^T \mu_k$ for all $i=\tilde{i}+1\ldots,T$, we have from (\ref{relay_soln}) and (\ref{source_soln})
\begin{align}
\opr{i} + \alpha \ops{i} = \frac{1}{\sum_{k=i}^T \eta_k} - (1+\alpha), \quad i = \tilde{i}+1,\dots,T
\end{align}
Hence, $\opr{i}=\frac{1}{(1 + \alpha) \sum_{k=i}^T \eta_k} -  1$, which implies that $\lambda_T = \frac{1}{1+\alpha}$ and $\lambda_i=0$ for $i=\tilde{i}+1,\ldots,T-1$. Moreover, $\eta_i=\frac{\mu_i}{\alpha} >0$ whenever $\sum_{k=1}^i\frac{\er{k} + \alpha \es{k}}{\alpha +1}$ is active for some $i=\tilde{i} + 1,\ldots,T$. As in \cite{ozel11,wless-submit}, we can show that such $\eta_i=\frac{\mu_i}{\alpha}$ that yield the power sequence in (\ref{prop1})-(\ref{prop2}) are uniquely found for $i=\tilde{i} + 1,\ldots,T$.

\begin{figure}[t]
\begin{center}
\includegraphics[width=0.7\linewidth]{./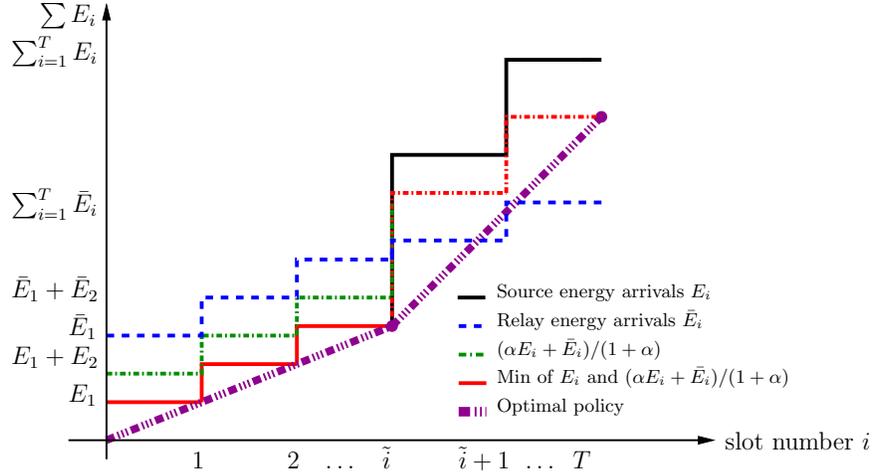}
\end{center}
\caption{Optimal power sequence and energy transfer when the relay energy profile is higher at the beginning and lower at the end with crossing only once.}
\label{spec1}
\vspace*{-0.1in}
\end{figure}

It remains to find the Lagrange multipliers for $i=1,\ldots,\tilde{i}$. We observe that $\eta_i = 0$ and $\rho_i = \sum_{k=i}^{\tilde{i}} \mu_k$ for $i=1,\ldots,\tilde{i}$. That is, the relay power constraint is not active in the first $\tilde{i}$ slots, i.e., $\sum_{k=1}^i \opr{k} < \sum_{k=1}^i \er{k}$, $i=1,\ldots,\tilde{i}$. To justify this claim, we note that since $\ops{i}=\opr{i}$ for $i=\tilde{i}+1,\ldots,T$, we have $\sum_{i=1}^{\tilde{i}} \frac{1}{2} \log{(1+ \ops{i})} =  \sum_{i=1}^{\tilde{i}} \frac{1}{2}\log{(1+ \opr{i})}$. By Lemma~\ref{opt_cond5}, selecting $\ps{i}=\pr{i}$ for $i=1,\ldots,\tilde{i}$ is the minimum energy consuming policy at the relay. Since by assumption $\sum_{k=1}^i \ps{k} \leq \sum_{k=1}^i \pr{k}$ for $i=1,\ldots,\tilde{i}$, $\ps{i}=\pr{i}$ is feasible and hence optimal, which in turn implies that $\sum_{k=1}^i \opr{k} < \sum_{k=1}^i \er{k}$ for $i=1,\ldots,\tilde{i}$. As a consequence, $\sum_{k=i}^T \eta_k = \sum_{k=\tilde{i}+1}^T \eta_k$, i.e., constant for all $i=1,\ldots,\tilde{i}$. As $\opr{i} \leq \opr{\tilde{i}+1}$, we can specify $0 \leq \lambda_i \leq \frac{1}{1+\alpha}$ recursively, with $\lambda_i >0$ only when $\sum_{k=1}^i\es{k}$ constraint is active, as follows
\begin{align}
\lambda_i = 1 - \opr{i}\sum_{k=\tilde{i}+1}^T \eta_k - \sum_{k=i+1}^{T} \lambda_k
\end{align}
Moreover, $\mu_i >0$ for slots where $\sum_{k=1}^i\es{k}$ constraint is active and $\mu_i = \frac{\sum_{k=i}^T \lambda_k}{\ops{i}} - \sum_{k=i+1}^T \mu_k$. Note that if $\ods{i} \neq 0$ for some $i$, the optimal source and relay power sequences are unique while there may exist infinitely many $\ods{i}$ that yield the same optimal power levels.

A particular case covered is when only the source has energy replenishments and the relay has all its energy available initially, i.e., $\er{1}> 0$ and $\er{i} = 0$ for $i>1$. If $\er{1} > \sum_{i=1}^T \es{i}$, the relay can forward all the bits sent from the source and the optimal policy is trivial. If $\er{1} < \sum_{i=1}^T \es{i}$, the optimal policy is obtained by forming a common energy profile via energy transfer and matching the power and rate sequences. Another special case is when $\tilde{i}=0$, i.e., when $\er{i}<\es{i}$ for all $i$. In this case, $\min\{ \sum_{k=1}^i\frac{\er{k} + \alpha\es{k}}{\alpha + 1},\sum_{k=1}^i\es{k}\} = \sum_{k=1}^i\frac{\alpha\er{k} + \es{k}}{\alpha + 1}$ for all $i$ and matching the relay and source power sequences is optimal with $\delta_i^* = \es{i} - \frac{\er{i} + \alpha\es{i}}{\alpha + 1}$. When $\tilde{i} = T$, we have $\er{i}>\es{i}$, $\forall i$. The source optimizes the throughput according to $\{\es{i}\}_{i=1}^{T}$ and the relay power is matched with the source.

\begin{figure}[t]
\begin{center}
\includegraphics[width=0.56\linewidth]{./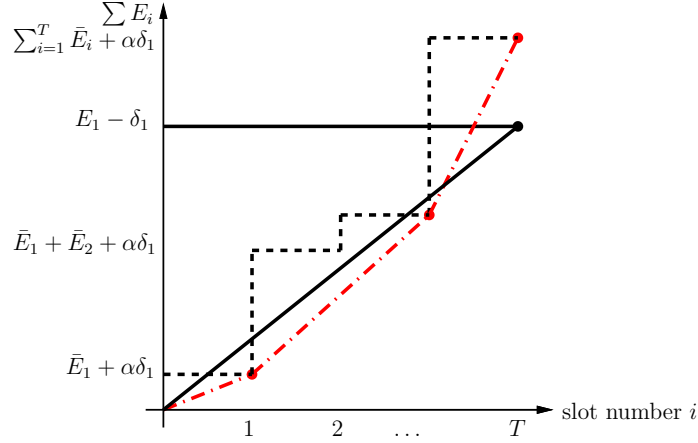}
\end{center}
\caption{Optimal power sequences and energy transfer when the source energy is available at the beginning.}
\label{spec2}
\vspace*{-0.1in}
\end{figure}

\subsection{Specific Scenario: Source Energy Available at the Beginning}

We consider the scenario where the source has all of its energy available at the beginning (i.e., $\es{1}>0$ only), and the relay harvests energy throughout the communication. Let the relay energy profile not be  satisfactory to forward the optimal source data stream which has constant rate $\frac{1}{2} \log{(1 + \frac{\es{1}}{T})}$. Assume $\delta_i \neq 0$ for some $i$. Since the source is not energy harvesting, the total energy of the source will then be $E_1 - \delta_i$ yielding an optimal transmission power of $\frac{E_1 - \delta_i}{T}$. Hence, the throughput of the source is independent of the slot index $i$ the energy is transferred. However, transferring the energy at slot $j<i$ can only increase the relay transmit powers after that slot; therefore, energy transfer has to be performed as early as possible, i.e., at the first slot. Hence, the jointly optimal policy is $\ods{1}\neq 0$ and $\ods{i}=0$ for the remaining slots as shown in Fig. \ref{spec2}. Note that the power sequences of the source and the relay are not matched. $\ods{1}$ is found by solving a fixed point equation as:
\begin{align}
f(\er{1}+\ods{1},\er{2},\ldots,\er{T}) = \frac{T}{2}\log\left(1+\frac{\es{1}-\ods{1}}{T}\right)
\end{align}
where $f(\er{1},\er{2},\ldots,\er{T})$ is the maximum number of bits corresponding to the energy arrival sequence $\er{1},\er{2},\ldots,\er{T}$.

\section{Gaussian Two-Way Channel with One-Way Energy Transfer}
\label{twcmodel}

In this section, we consider a two-way channel as shown in Fig.~\ref{sysmodtwc}. The two queues at the nodes are the data and energy queues. The energies that arrive at the nodes are saved in the corresponding energy queues. The data queues of both users always carry some data packets. The physical layer is a memoryless Gaussian two-way channel \cite{tesunhan} where the channel inputs and outputs are $x_1$, $x_2$ and $y_1$, $y_2$, respectively. The input-output relations are $y_1=x_1 + x_2 + n_1$ and $y_2=x_1 + x_2 + n_2$ where $n_1$ and $n_2$ are independent Gaussian noises with zero-mean and unit-variance. In slot $t$, the first and second users harvest energy in amounts $\es{t}$ and $\er{t}$, respectively. There is a separate wireless energy transfer unit at the first user, that transfers energy from the first user to the second user with efficiency $0 \leq \alpha \leq 1$. The power policy of user $1$ is composed of the sequences $\ps{i}$ and $\ds{i}$, and the power policy of user $2$ is the sequence $\pr{i}$.

For the Gaussian two-way channel with individual power constraints $P_1$ and $P_2$, rate pairs $(R_1,R_2)$ with $R_1 \leq \frac{1}{2} \log{(1 + P_1)}, R_2 \leq \frac{1}{2} \log{(1 + P_2)}$ are achievable \cite{tesunhan}. For a fixed energy transfer vector $\deltav$, and feasible power control policies $\psource$ and $\prelay$, the set of achievable rates is:
\begin{align}
\mathcal{C}_{\deltav} (\psource,\prelay) =  \Big \{ (R_1, R_2): \ R_1 \leq \sum_{i=1}^T \frac{1}{2} \log{(1+\ps{i})}, \ \ R_2 \leq \sum_{i=1}^T \frac{1}{2} \log{(1+\pr{i})}  \Big\} \label{capdelta}
\end{align}
The notation shows the dependence of the region on the energy transfer vector $\deltav$. This region is shown in Fig.~\ref{twowaycap} for different values of $\deltav$. Each of these regions are rectangles of the form $R_i \leq C_i$ where $C_i$ is the maximum throughput achieved for user $i$ found by maximizing (\ref{capdelta}) constrained to the feasibility constraints $\mathcal{F}$. As $\deltav$ is increased, energy is transferred from user 1 to user 2 therefore $C_1$ decreases while $C_2$ increases. By taking the union of the regions over all possible energy transfer vectors and power policies for the users, we obtain the capacity region of the Gaussian two-way channel as:
\begin{equation}
\mathcal{C} (\esource,\erelay) = \bigcup_{(\deltav,\psource,\prelay) \in \mathcal{F}}
\mathcal{C}_{\deltav} (\psource,\prelay) \label{twc-cap-region}
\end{equation}
We determine the capacity region of the Gaussian two-way channel in the next section, by solving weighted rate maximization problems which trace the boundary of the capacity region.

\begin{figure}
\begin{center}
\includegraphics[width=0.6\linewidth]{./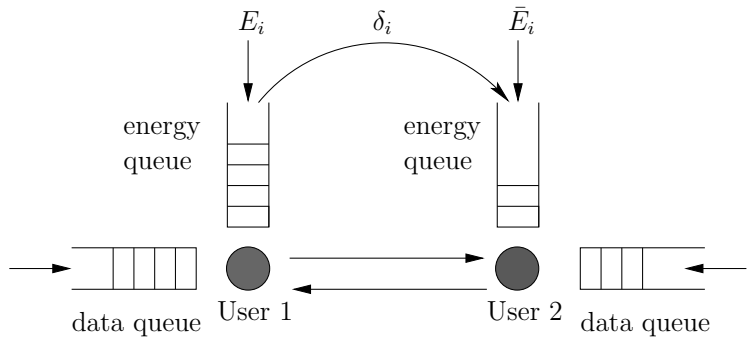}
\end{center}
\caption{Two-way channel with one-way energy transfer.}
\label{sysmodtwc}
\vspace*{-0.1in}
\end{figure}

\section{Capacity Region of the Gaussian Two-Way Channel}
\label{twoway}

In this section, we characterize the capacity region as well as the optimal power allocation and energy transfer policies. We start by noting that the capacity region is convex in the following lemma. The proof of this lemma is provided in Appendix~\ref{pf1}.
\begin{lemma}
\label{convexlemma}
$\mycap$ is a convex region.
\end{lemma}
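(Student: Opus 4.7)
My plan is to prove convexity directly by a time-sharing/averaging argument at the level of the power and energy-transfer vectors, exploiting two facts: the feasible set $\mathcal{F}$ is cut out by linear inequalities in $(\deltav,\psource,\prelay)$ and is therefore convex, and the per-slot rate function $\frac{1}{2}\log(1+P)$ is concave in $P$.

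Concretely, I would pick any two rate pairs $(R_1,R_2),(R_1',R_2')\in\mycap$. By definition of the union in (\ref{twc-cap-region}), there exist feasible triples $(\deltav,\psource,\prelay),(\deltav',\psource',\prelay')\in\mathcal{F}$ such that the first pair lies in $\mathcal{C}_{\deltav}(\psource,\prelay)$ and the second in $\mathcal{C}_{\deltav'}(\psource',\prelay')$. Fix $\lambda\in[0,1]$ and form the coordinate-wise convex combinations
\begin{align}
\deltav''=\lambda\deltav+(1-\lambda)\deltav',\ \psource''=\lambda\psource+(1-\lambda)\psource',\ \prelay''=\lambda\prelay+(1-\lambda)\prelay'.\nonumber
\end{align}
Each of the three sets of inequalities defining $\mathcal{F}$ in (\ref{feasset}) is linear in $(\deltav,\psource,\prelay)$ with nonnegative right-hand sides that do not depend on the decision variables, so $\mathcal{F}$ is a convex set and hence $(\deltav'',\psource'',\prelay'')\in\mathcal{F}$.

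Next I would verify that the averaged policy achieves at least the convex combination of the two rate pairs. By concavity of $x\mapsto\frac{1}{2}\log(1+x)$ on $[0,\infty)$, for every slot $i$,
\begin{align}
\tfrac{1}{2}\log\bigl(1+\lambda\ps{i}+(1-\lambda)\ps{i}'\bigr)\ \geq\ \lambda\cdot\tfrac{1}{2}\log(1+\ps{i})+(1-\lambda)\cdot\tfrac{1}{2}\log(1+\ps{i}'),\nonumber
\end{align}
and the analogous inequality for $\pr{i},\pr{i}'$. Summing over $i=1,\dots,T$ and using (\ref{capdelta}) gives that the first-user rate achievable under $(\deltav'',\psource'',\prelay'')$ is at least $\lambda R_1+(1-\lambda)R_1'$, and similarly for the second user. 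Therefore $\bigl(\lambda R_1+(1-\lambda)R_1',\lambda R_2+(1-\lambda)R_2'\bigr)\in\mathcal{C}_{\deltav''}(\psource'',\prelay'')\subseteq\mycap$, which establishes convexity.

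There is no real obstacle here — the main thing to be careful about is that the union in (\ref{twc-cap-region}) is over a convex feasible set and that averaging the instantaneous powers rather than the rates is what gives Jensen in the correct direction; once both are in place the result is immediate. (One could alternatively appeal to time-sharing over two independent uses of the protocol, but since the problem is posed over a fixed horizon of $T$ slots with fixed energy arrivals, the averaging-at-the-policy-level argument above is the cleanest.)
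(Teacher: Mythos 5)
Your proof is correct and follows essentially the same route as the paper's: form the convex combination of the two feasible policies, note that feasibility is preserved because the constraints in $\mathcal{F}$ are linear in $(\deltav,\psource,\prelay)$, and apply concavity of $\frac{1}{2}\log(1+P)$ slot by slot to conclude that the averaged policy dominates the convex combination of the rate pairs. The only cosmetic difference is that the paper argues via the upper corner points of the rectangular regions while you work with arbitrary rate pairs, which amounts to the same thing since each $\mathcal{C}_{\deltav}(\psource,\prelay)$ is downward closed.
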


Since $\mycap$ is convex, each boundary point can be found by solving the following weighted rate maximization problem:
\begin{eqnarray}
\nonumber \max_{\pr{i}, \, \ps{i}, \, \ds{i}} & &  \sum_{i=1}^T \hp_1 \frac{1}{2} \log{(1+\ps{i})} + \hp_2 \frac{1}{2} \log{(1+\pr{i})} \\ \mbox{s.t.} & & (\deltav,\psource,\prelay) \in \mathcal{F} \label{opt_prob1}
\end{eqnarray}
The problem in (\ref{opt_prob1}) is a convex optimization problem as the objective function is concave and the feasible set is a convex set \cite{boyd}. We write the Lagrangian function for (\ref{opt_prob1}) as:
\begin{align}
\nonumber \mathcal{L} =&- \sum_{i=1}^T \left[ \hp_1 \log{(1+\ps{i})} + \hp_2 \log{(1+\pr{i})} \right] + \sum_{k=1}^T \mu_k \left(\sum_{i=1}^k \ps{i} - (\es{i} - \delta_i) \right) \\ &+ \sum_{k=1}^T \eta_k \left(\sum_{i=1}^k \pr{i} - (\er{i} + \alpha \delta_i) \right) - \sum_{k=1}^T \rho_k \delta_k
\label{lagrangian_relay}
\end{align}
where we do not include constraints $\sum_{i=1}^k \delta_i \leq \sum_{i=1}^k \es{i}$ since these constraints can never be satisfied with equality in the optimal policy for the Gaussian two-way channel for $\hp_1,\hp_2 > 0$, as that would require $\ps{i}=0$ for some $i$, which is suboptimal. Zero powers are optimal only in the degenerate case $\theta_1=0$, which corresponds to maximizing $R_2$, whose solution is point 3 in Fig.~\ref{twowaycap}. The KKT conditions for this problem are:
\begin{align}
-\frac{\hp_1}{1 + \ps{i}} + \sum_{k=i}^T \mu_k = 0,  \quad \forall i\label{KKT_source_TWC}\\
-\frac{\hp_2}{1 + \pr{i}} + \sum_{k=i}^T \eta_k = 0, \quad \forall i \label{KKT_relay_TWC}\\
\sum_{k=i}^T \mu_k - \alpha \sum_{k=i}^T \eta_k - \rho_i = 0,  \quad \forall i \label{KKT_energy_TWC}
\end{align}
with the additional complementary slackness conditions as:
\begin{align}
\mu_k  \left(\sum_{i=1}^k \ps{i} - (\es{i} - \ds{i}) \right) & = 0, \quad \forall k \label{comp_slack1} \\
\eta_k  \left(\sum_{i=1}^k \pr{i} - (\er{i} + \alpha \ds{i}) \right) & = 0, \quad \forall k \label{comp_slack2} \\
\rho_k \ds{k} & = 0, \quad \forall k \label{comp_slack3}
\end{align}
From (\ref{KKT_source_TWC}), (\ref{KKT_relay_TWC}) and (\ref{KKT_energy_TWC}) we get:
\begin{align}
\ps{i} &=  \frac{\hp_1}{\sum_{k=i}^T \mu_k} -1,
\quad \forall i \label{source_soln_TWC} \\
\pr{i} &=   \frac{\hp_2}{\sum_{k=i}^T \eta_k} -1,
\quad \forall i \label{relay_soln_TWC} \\
\rho_i &= \sum_{k=i}^T \mu_k - \alpha \sum_{k=i}^T \eta_k, \quad \forall i \label{transfer_soln_TWC}
\end{align}

\begin{figure}[t]
\begin{center}
\includegraphics[scale=0.85]{./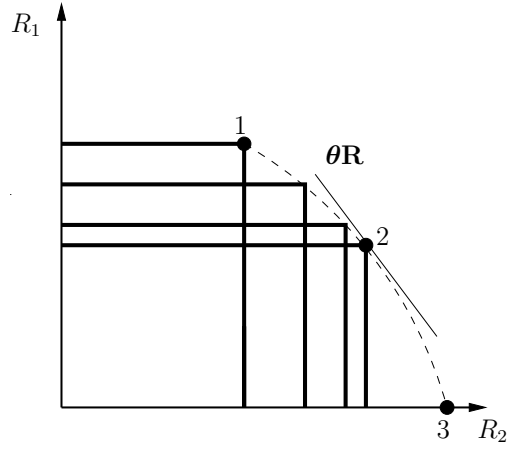}
\end{center}
\caption{Capacity region of the Gaussian two-way channel.}
\label{twowaycap}
\vspace*{-0.1in}
\end{figure}

We will give the solution for general $\hp_1,\hp_2 >0$ in the sequel. Before that, we note that in the extreme case when $\hp_2 = 0$, the problem reduces to maximizing the first user's throughput only and hence any energy transfer is strictly sub-optimal, i.e.,  $\deltav = \mathbf{0}$ is optimal. This corresponds to point 1 in Fig.~\ref{twowaycap}. Similarly, when $\hp_1 = 0$, the problem reduces to maximizing the second user's throughput only and the first user must transfer all of its energy to the second user, i.e., $\deltav = \esource$ is optimal. This corresponds to point 3 in Fig.~\ref{twowaycap}. When $\hp_1,\hp_2>0$, we obtain the points between points 1 and 3 in Fig.~\ref{twowaycap}. In this case, for a given energy transfer profile $\delta_1,\dots,\delta_T$, the optimization problem can be separated into two optimization problems, each only in terms of the power control policy of the corresponding user. For fixed $\deltav$, the optimal power policies of the two users can be found by \cite{tcom-submit}.

Next, we provide the necessary optimality condition for a non-zero energy transfer.
\begin{lemma}
\label{ratiolemma}
For the optimal power sequences $\ops{i},\opr{i}$ and energy transfer sequence $\ods{i}$, if $\ods{i} \neq 0$ for a slot $i$, then
\begin{equation}
\frac{1+\ops{i}}{1+\opr{i}} = \frac{\hp_1}{\hp_2 \alpha} \label{ratio}
\end{equation}
\end{lemma}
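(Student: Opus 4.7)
The plan is to exploit the KKT conditions derived just before the lemma, especially the complementary slackness condition $\rho_k \ds{k} = 0$ together with the stationarity relation (\ref{transfer_soln_TWC}).

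First I would fix a slot $i$ with $\ods{i} \neq 0$. Applying the complementary slackness condition $\rho_i \ods{i} = 0$ from (\ref{comp_slack3}) immediately forces $\rho_i = 0$. Substituting this into the stationarity condition (\ref{transfer_soln_TWC}) yields the tail identity
\begin{equation}
\sum_{k=i}^T \mu_k \; = \; \alpha \sum_{k=i}^T \eta_k. \label{tailid}
\end{equation}

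Next I would rewrite the two power-allocation identities (\ref{source_soln_TWC}) and (\ref{relay_soln_TWC}) as
\begin{align}
1 + \ops{i} = \frac{\hp_1}{\sum_{k=i}^T \mu_k}, \qquad 1 + \opr{i} = \frac{\hp_2}{\sum_{k=i}^T \eta_k},
\end{align}
taking the ratio, and substituting (\ref{tailid}) to eliminate the Lagrange multiplier tails. This gives
\begin{equation}
\frac{1+\ops{i}}{1+\opr{i}} = \frac{\hp_1}{\hp_2}\cdot\frac{\sum_{k=i}^T \eta_k}{\sum_{k=i}^T \mu_k} = \frac{\hp_1}{\hp_2 \alpha},
\end{equation}
which is exactly (\ref{ratio}).

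There is really no obstacle to overcome: the lemma is a direct algebraic consequence of the already-derived KKT stationarity and complementary slackness relations. The only subtlety worth noting is that we implicitly need the tail sums $\sum_{k=i}^T \mu_k$ and $\sum_{k=i}^T \eta_k$ to be strictly positive so that the divisions are legitimate, but this is ensured by (\ref{source_soln_TWC}) and (\ref{relay_soln_TWC}) being finite together with $\hp_1,\hp_2 > 0$ (the degenerate cases $\hp_1 = 0$ or $\hp_2 = 0$ have been handled separately just before the lemma is stated, so we may restrict attention to $\hp_1,\hp_2>0$).
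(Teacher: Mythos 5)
Your proof is correct and follows essentially the same route as the paper: both invoke complementary slackness $\rho_i \ods{i}=0$ to get $\rho_i=0$, then combine the stationarity relations (\ref{source_soln_TWC}), (\ref{relay_soln_TWC}), (\ref{transfer_soln_TWC}) to obtain the ratio; you merely substitute $\rho_i=0$ before forming the ratio rather than after. Your remark about the strict positivity of the multiplier tail sums is a minor but valid point of care that the paper leaves implicit.
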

\begin{Proof}
From (\ref{source_soln_TWC}), (\ref{relay_soln_TWC}) and (\ref{transfer_soln_TWC}), we have
\begin{equation}
\frac{1+\ops{i}}{1+\opr{i}} = \frac{\hp_1 \sum_{k=i}^T \eta_k}{\hp_2 (\alpha \sum_{k=i}^T \eta_k + \rho_i) }
\end{equation}
If there is a non-zero energy transfer, $\ods{i} \neq 0$, we have from (\ref{comp_slack3}), $\rho_i = 0$. Therefore, (\ref{ratio}) must be satisfied if $\ods{i}\neq 0$.
\end{Proof}

In order to devise an algorithmic solution, we apply a change of variable $\prnew{i} = \frac{\pr{i}}{\alpha}$ and re-write the optimization problem in terms of $\ps{i},\prnew{i},\ds{i}$ as follows:
\begin{eqnarray}
\nonumber \max_{\prnew{i}, \, \ps{i}, \, \ds{i}} & &  \sum_{i=1}^T \hp_1 \frac{1}{2} \log{(1+ \ps{i})} + \hp_2 \frac{1}{2} \log{(1+ \alpha \prnew{i})}  \\
\nonumber \mbox{s.t.} & & \sum_{i=1}^k \ps{i} \leq \sum_{i=1}^k (\es{i} - \delta_i), \quad \forall k\\
\nonumber & & \sum_{i=1}^k \prnew{i} \leq \sum_{i=1}^k \left(\frac{\er{i}}{\alpha} + \delta_i\right), \quad \forall k \\
 & & \sum_{i=1}^k \delta_i \leq \sum_{i=1}^k \es{i}, \quad \forall k \label{opt_prob2}
\end{eqnarray}
The optimal power allocation for this problem is:
\begin{align}
\ops{i} &= \hp_1 \wls{i} - 1, \quad \forall i \label{newlevel1} \\
\oprnew{i} &= \hp_2 \wlnew{i} - \frac{1}{\alpha}, \quad \forall i \label{newlevel2}
\end{align}
where $\wls{i}$ and $\wlnew{i}$ in slot $i$ are defined by
\begin{align}
\wls{i} = \frac{1}{\sum_{k=i}^T \mu_k} \quad \mbox{and} \quad
\wlnew{i} = \frac{1}{\sum_{k=i}^T \eta_k} \label{wlr_define}
\end{align}

The power level expressions in (\ref{newlevel1})-(\ref{newlevel2}) lead to a directional water-filling interpretation \cite{ozel11}. In particular, we note that energy has to be jointly allocated in time and user dimensions together. This calls for a \textit{two-dimensional directional water-filling} algorithm where energy is allowed to flow in two dimensions, from left to right (in time) and from up to down (among users). We, next, explain this algorithm.

\subsection{Two-Dimensional Directional Water-filling Algorithm}

We utilize \textit{right permeable taps} to account for energy which will be used in the future and \textit{down permeable taps} to account for energy that will be transferred from user $1$ to user $2$; see Fig.~\ref{twodirwfmetertaps}. We note from the KKT conditions that $\wls{i} = \wlnew{i}$ in slots where there is non-zero energy transfer. Note that in the original problem, this implies that if some energy is transferred, then the power levels in that slot need to satisfy (\ref{ratio}). The base levels for users 1 and 2 are $1$ and $\frac{1}{\alpha}$, respectively. Moreover, to facilitate the water flow interpretation, we scale the energy arrivals of user 2 by $\frac{1}{\alpha}$ as seen in (\ref{opt_prob2}). If the resulting water levels are higher for user $1$ or not monotonically increasing in time for both users, then water has to flow until the levels are balanced.

While finding the balanced water levels, the two dimensions of the water flow (i.e., in time and among users) are coupled and therefore it is not easy to determine beforehand which taps will be open or closed in the optimal solution. In particular, the water flow of user $2$ from time slot $i$ to time slot $i+j$, $j>0$, may become redundant if some energy is transferred from user $1$ in time slot $i+j$. To circumvent this difficulty, we let each tap (right/down permeable) have a \textit{meter} measuring the water that has already passed through it and we allow that tap to let the water flow back if an update in the allocation necessitates it. This way, we keep track of the source of the energy and whether it is transferred to future time slots or to the other user. First, we fill energy into the slots with all taps closed. Then, we open only the right permeable taps and perform directional water-filling (over time) for both users individually \cite{ozel11}. Then, we open the down taps one by one in a backward fashion. If water flows down through a tap, the amount is measured by the meter. Water levels in the slots connected by the bi-directional horizontal taps need to be equal. Whenever water flows down through a down permeable tap, the water levels must satisfy the proportionality relationship in (\ref{ratio}). When the water levels are properly balanced, the optimal solution is obtained.

An example run of the algorithm is given in Fig.~\ref{twodirwfmetertaps} for $\theta_1=\theta_2$ and $\alpha=1$. Initially, we open the right permeable taps and the water levels are equalized for the first user. Then, we open the down permeable taps. In the second slot there is no need for energy transfer because $\frac{E_1 + E_2}{2} < \er{2}$. In the first slot there will be some non-zero energy transfer since $\frac{E_1 + E_2}{2} > \er{1}$, and some water flows through the first down permeable tap. Since user 1's right permeable tap has a positive meter at that point, some water is allowed to flow from right to left thereby equalizing the water levels of user 1's first and second slots and user 2's first slot.

\begin{figure}[t]
\begin{center}
\includegraphics[scale=0.5]{./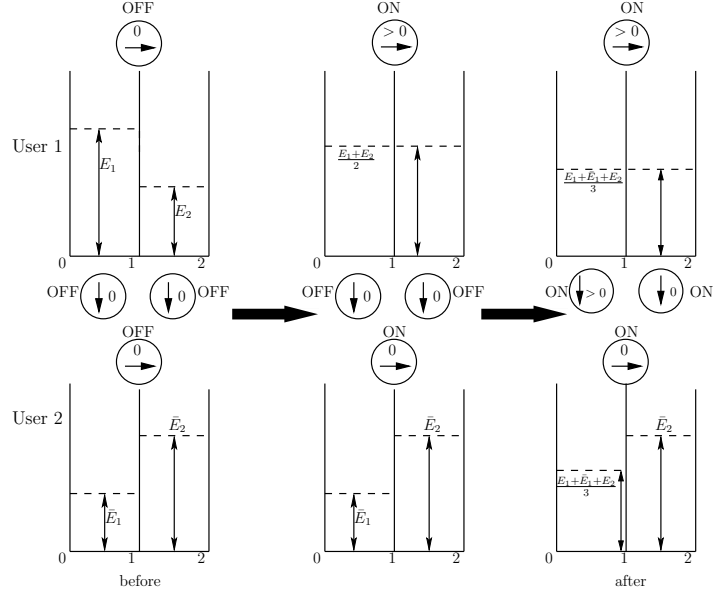}
\end{center}
\caption{Two-dimensional directional water-filling with right/down permeable meter taps for a two slot system.}
\label{twodirwfmetertaps}
\vspace*{-0.1in}
\end{figure}

\subsection{A Specific Run of the Algorithm}

In order to show more specifically how the algorithm runs and further justify the particular sequence of steps followed in the two-dimensional water-filling algorithm, we next provide a numerical example where $\esource =[0,12,0]$ mJ, $\erelay = [6,6,0]$ mJ and $\alpha = 1$. $\htap{1}{i}$, $\htap{2}{i}$ denote the horizontal taps of the first and second users connecting the $i$th and $i+1$st slots, $\vtap{i}$ denotes the $i$th vertical tap. The optimal solution is $\psource = [0,4.8,4.8]$ and $\prelay = [4.8,4.8,4.8]$, which is obtained by spreading the energy as equally as possible in two dimensions among the users and time slots, subject to energy causality. In order to understand why the particular order of tap openings is optimal, we next consider two sub-optimal orderings of tap openings.

Assume that we open the horizontal taps first and keep the vertical taps closed. This yields the transient water levels $\psource = [0,6,6]$ and $\prelay = [4,4,4]$. Now, if we open the vertical taps, water is transferred in the second and third slots and the balanced final levels are $\psource = [0,5,5]$ and $\prelay = [4,5,5]$. This profile is not optimal since the second user changes its power level when the battery is non-empty, violating \cite[Lemma 2]{tcom-submit}.

Now, assume that we open the vertical taps first and keep the horizontal taps closed. Energy is transferred in the second slot and the new transient water levels will be $\psource = [0,9,0]$ and $\prelay = [6,9,0]$. Then, when we open the horizontal taps, we will have $\psource = [0, 4.5, 4.5]$ and $\prelay = [5,5,5]$. This profile is not optimal either, as after energy transfer, the source power level is less than the relay power level, violating Lemma~\ref{ratiolemma}.

We now show how the two-dimensional directional water-filling algorithm works. First, we open the horizontal taps to get $\psource = [0,6,6]$ and $\prelay = [4,4,4]$ with the water meters reading $[0,6]$ and $[2,2]$. Recall that the taps with positive meter readings allow bi-directional energy transfer. Next, we open the vertical taps in a backward fashion. Once $\vtap{3}$ is opened, water flows to the second user and since $\htap{2}{1}$, $\htap{2}{2}$ are bi-directional it starts to fill all the slots of the second user. A balance is established when $\psource = [0,4.8,4.8]$ and $\prelay = [4.8,4.8,4.8]$, which is the optimal solution.

\section{Multiple Access Channel with One-Way Energy Transfer}
\label{macmodel}

In this section, we consider the multiple access channel scenario shown in Fig.~\ref{macmodelfig}. In the multiple access channel, the received signal is $y = x_1 + x_2 + n$ where $x_1$ and $x_2$ are signals of user 1 and user 2, respectively, and $n$ is a Gaussian noise with zero-mean and unit-variance. For the Gaussian two-user multiple access channel with individual power constraints $P_1$ and $P_2$, rate pairs $(R_1,R_2)$ with $R_1 \leq \frac{1}{2} \log{(1 + P_1)}, R_2 \leq \frac{1}{2} \log{(1 + P_2)}$, $R_1 + R_2 \leq \frac{1}{2} \log{(1 + P_1 + P_2)}$ are achievable \cite{Cover06}. For a fixed energy transfer vector $\deltav$, and feasible power control policies $\psource$ and $\prelay$, the set of achievable rates is a pentagon defined as \cite{jing12jcn}:
\begin{align}\nonumber
\mathcal{C}_{\deltav} (\psource,\prelay) =  \Big\{ (R_1, R_2): R_1 &\leq \sum_{i=1}^T \frac{1}{2} \log{(1+\ps{i})}, \\ \nonumber R_2 &\leq \sum_{i=1}^T \frac{1}{2} \log{(1+\pr{i})}, \\  R_1 + R_2 &\leq \sum_{i=1}^T \frac{1}{2} \log{(1+\pr{i} + \ps{i})} \label{capdeltamac} \Big\}
\end{align}
For each feasible $(\psource,\prelay,\deltav)$, the region is a pentagon. We obtain the capacity region by taking the union of these regions over all feasible power allocations and energy transfer profiles:
\begin{equation}
\mathcal{C} (\esource,\erelay) = \bigcup_{(\deltav,\psource,\prelay) \in \mathcal{F}}
\mathcal{C}_{\deltav} (\psource,\prelay)
\end{equation}
We determine the capacity region of the Gaussian multiple access channel in the next section.

\begin{figure}[t]
\begin{center}
\includegraphics[scale=0.76]{./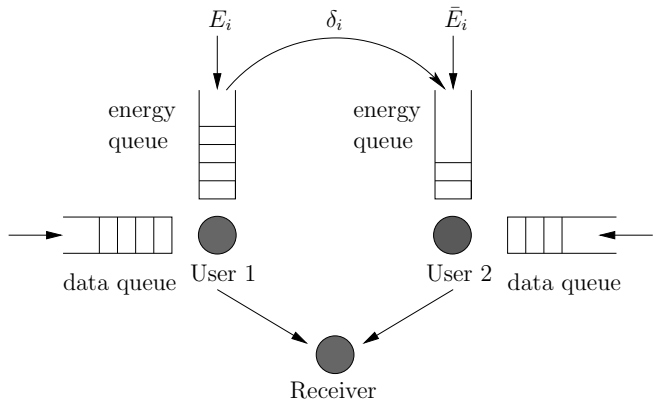}
\end{center}
\caption{Multiple access channel with one-way energy transfer.}
\label{macmodelfig}
\vspace*{-0.1in}
\end{figure}

\section{Capacity Region of the Gaussian Multiple Access Channel}
\label{macsection}

In this section, we characterize the capacity region as well as the optimal power allocation and energy transfer policies. First, we note in the following lemma that the capacity region is convex. We prove this lemma in Appendix~\ref{pf2}.

\begin{lemma}
\label{convexlemmamac}
$\mathcal{C} (\esource,\erelay)$ is a convex region.
\end{lemma}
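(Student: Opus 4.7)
The plan is to mirror the convexity argument for the two-way channel (Lemma~\ref{convexlemma}), adapted to handle the additional sum-rate constraint that appears in the MAC pentagon. The backbone is a convex-combination (time-sharing-in-policy) argument: given two boundary rate pairs and their underlying feasible triples, I will form the convex combination of the triples and show the combined rate pair is achievable under the combined triple.

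Concretely, pick two rate pairs $(R_1^a,R_2^a),(R_1^b,R_2^b)\in\mathcal{C}(\esource,\erelay)$ and, for each $\lambda\in[0,1]$, let $(\deltav^a,\psource^a,\prelay^a),(\deltav^b,\psource^b,\prelay^b)\in\mathcal{F}$ be feasible triples that achieve them. Define
\begin{align}
\deltav^\lambda &= \lambda\deltav^a + (1-\lambda)\deltav^b, \\
\psource^\lambda &= \lambda\psource^a + (1-\lambda)\psource^b, \\
\prelay^\lambda &= \lambda\prelay^a + (1-\lambda)\prelay^b.
\end{align}
The first step is to observe that $(\deltav^\lambda,\psource^\lambda,\prelay^\lambda)\in\mathcal{F}$: the set $\mathcal{F}$ in (\ref{feasset}) is cut out by linear inequalities in $(\deltav,\psource,\prelay)$, so it is convex, and in particular stable under this averaging.

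The second step is to verify that the convex-combination rate pair $(R_1^\lambda,R_2^\lambda)=\lambda(R_1^a,R_2^a)+(1-\lambda)(R_1^b,R_2^b)$ lies inside the pentagon $\mathcal{C}_{\deltav^\lambda}(\psource^\lambda,\prelay^\lambda)$ in (\ref{capdeltamac}). For the two individual-rate constraints, concavity of $x\mapsto \tfrac{1}{2}\log(1+x)$ on $x\geq 0$ gives, for every slot $i$,
\begin{align}
\tfrac{1}{2}\log(1+\ps{i}^\lambda) \geq \lambda\,\tfrac{1}{2}\log(1+\ps{i}^a)+(1-\lambda)\,\tfrac{1}{2}\log(1+\ps{i}^b),
\end{align}
and analogously for $\pr{i}^\lambda$. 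Summing over $i$ upper-bounds $\lambda R_1^a+(1-\lambda)R_1^b = R_1^\lambda$ by $\sum_i \tfrac{1}{2}\log(1+\ps{i}^\lambda)$, and similarly for $R_2^\lambda$. For the sum-rate face, the key observation is that $(p,q)\mapsto \tfrac{1}{2}\log(1+p+q)$ is concave on $\mathbb{R}_+^2$ (it is the composition of the concave increasing map $\tfrac{1}{2}\log(1+\cdot)$ with the linear map $(p,q)\mapsto p+q$), so the same slotwise concavity inequality applies and yields $R_1^\lambda+R_2^\lambda \leq \sum_i \tfrac{1}{2}\log(1+\ps{i}^\lambda+\pr{i}^\lambda)$.

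Combining the two steps, $(R_1^\lambda,R_2^\lambda)\in\mathcal{C}_{\deltav^\lambda}(\psource^\lambda,\prelay^\lambda)\subseteq \mathcal{C}(\esource,\erelay)$, establishing convexity. The only substantive point beyond the two-way channel proof is checking the sum-rate constraint; this is the piece I would be most careful about, but the composition-with-linear argument handles it cleanly without needing any new machinery. No standard time-sharing via an auxiliary random variable is required here, because the rate region is already expressed as a deterministic function of a convex feasible set.
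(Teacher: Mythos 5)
Your proposal is correct and follows essentially the same route as the paper's proof in Appendix~\ref{pf2}: convex-combine the two feasible triples, use linearity of the constraints in $\mathcal{F}$ to get feasibility, and then use slotwise concavity of $\frac{1}{2}\log(1+\cdot)$ applied to $\ps{i}$, $\pr{i}$, and $\ps{i}+\pr{i}$ to show the averaged rate pair lies in the pentagon of the averaged policy. Your explicit remark that the sum-rate face is handled by concavity of the composition with the linear map $(p,q)\mapsto p+q$ is exactly the step the paper carries out in its final chain of inequalities.
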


Since the region is convex, each boundary point is a solution to $\max_{\textbf{R} \in \mathcal{C}^M} \boldsymbol{\hp} \textbf{R}$ \cite{tse_mac} for some $\boldsymbol{\hp}=[\hp_1,\hp_2]$. We examine two cases separately, $\hp_1 \geq \hp_2$ and  $\hp_1 < \hp_2$.

\subsection{$\hp_1 \geq \hp_2$}
We show that when $\hp_1 \geq \hp_2$, no energy transfer from user 1 to user 2 is needed. Note that as $\hp_1 \geq \hp_2$, the boundary points between $1$, $2$ and $3$ in Fig.~\ref{maccap} are found by solving the problem:
\begin{align}
\nonumber \max_{\pr{i}, \, \ps{i}, \; \ds{i}} & \, \sum_{i=1}^T  (\hp_1-\hp_2)\frac{1}{2} \log{(1+\ps{i})} + \hp_2 \frac{1}{2} \log{(1+\pr{i}+\ps{i})} \\
\mbox{s.t.} & \; (\deltav,\psource,\prelay) \in \mathcal{F} \label{opt_prob1m}
\end{align}
The problem in (\ref{opt_prob1m}) is a convex optimization problem as the objective function is concave and the feasible set is a convex set \cite{boyd}. We write the Lagrangian function for (\ref{opt_prob1m}) as:
\begin{align}
\nonumber \mathcal{L} =&- \sum_{i=1}^T \left[ (\hp_1-\hp_2) \log{(1+\ps{i})} + \hp_2 \log{(1+\pr{i}+\ps{i})} \right] + \sum_{k=1}^T \mu_k \left(\sum_{i=1}^k \ps{i} - (\es{i} - \delta_i) \right) \\ &+ \sum_{k=1}^T \eta_k \left(\sum_{i=1}^k \pr{i} - (\er{i} + \alpha \delta_i) \right) + \sum_{k=1}^T \gamma_k \left(\sum_{i=1}^k \delta_i - \es{i} \right) - \sum_{k=1}^T \rho_k \delta_k
\label{lagrangian_mac}
\end{align}
where we now included the constraints $\sum_{i=1}^k \delta_i \leq \sum_{i=1}^k \es{i}$. The KKT conditions are:
\begin{align}
-\frac{\hp_1 - \hp_2}{1 + \ps{i}} - \frac{\hp_2}{1 + \ps{i} + \pr{i}} + \sum_{k=i}^T \mu_k = 0,  \quad \forall i \label{KKT_sourcem}\\
-\frac{\hp_2}{1 + \ps{i} + \pr{i}} + \sum_{k=i}^T \eta_k = 0, \quad \forall i \label{KKT_relaym}\\
\sum_{k=i}^T \mu_k - \alpha \sum_{k=i}^T \eta_k + \sum_{k=i}^T \gamma_k  - \rho_i = 0,  \quad \forall i \label{KKT_energym}
\end{align}
Since $\hp_1 \geq \hp_2$, from (\ref{KKT_sourcem})-(\ref{KKT_relaym}), we have $\sum_{k=i}^T \mu_k \geq \sum_{k=i}^T \eta_k$, which is satisfied with equality iff $\hp_1=\hp_2$. This together with (\ref{KKT_energym}) implies that $\rho_i - \sum_{k=i}^T \gamma_k \geq 0$, which is satisfied with equality iff $\hp_1=\hp_2$ and $\alpha=1$. Therefore, unless we have exactly $\hp_1=\hp_2$ and $\alpha=1$, we must have $\rho_i > 0$ for all $i$. This together with the complementary slackness conditions $\rho_k \delta_k=0$ implies that we must have $\ds{i} = 0$ for all $i$, i.e., no energy transfer is needed. However, when $\theta_1=\theta_2$ and additionally if  $\alpha=1$, then there may exist multiple different optimal energy transfer profiles, including the one with no energy transfer.

\begin{figure}[t]
\begin{center}
\includegraphics[scale=0.85]{./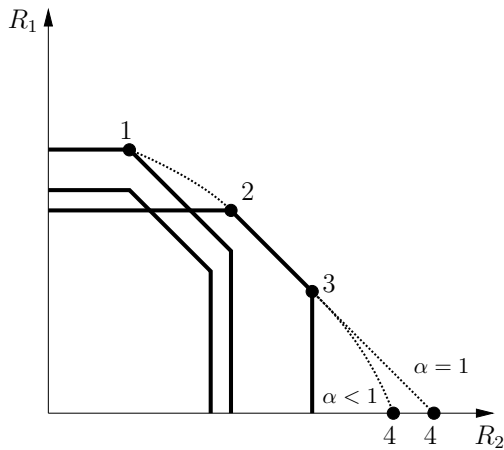}
\end{center}
\caption{Capacity region of the Gaussian multiple access channel for $\alpha=1$ and $\alpha<1$.}
\label{maccap}
\vspace*{-0.1in}
\end{figure}

Since energy transfer is not needed, optimal power control policies for the two users are the same as those in the energy harvesting multiple access channel with no energy transfer and can be found by the \textit{generalized backward directional water-filling algorithm} described in \cite{jing12jcn}. That is, the capacity region boundary from point $1$ to point $3$ in Fig.~\ref{maccap} is found by the algorithm in \cite{jing12jcn}. Specifically, for $\hp_1=\hp_2$, we have $\eta_k=\mu_k$ for all $k$ and the sum-rate optimal power policies are obtained by applying single-user directional water-filling algorithm to the sum of the energy profiles of the two users \cite{jing12jcn}.

\subsection{$\hp_1 < \hp_2$}

Here, we consider the remaining parts of the boundary, namely the points from point $3$ to point $4$ in Fig.~\ref{maccap}.  In this case, we need to solve the following optimization problem:
\begin{eqnarray}
\nonumber \max_{\pr{i}, \, \ps{i}, \, \ds{i}} & &   \sum_{i=1}^T (\hp_2-\hp_1)\log{(1+\pr{i})} + \hp_1 \log{(1+\pr{i}+\ps{i})} \\
\mbox{s.t.} & & (\deltav, \, \prelay, \, \psource) \in \mathcal{F} \label{opt_prob2m}
\end{eqnarray}
which is a convex optimization problem and the corresponding KKT conditions are:
\begin{align}
-\frac{\hp_1}{1 + \ps{i} + \pr{i}} + \sum_{k=i}^T \mu_k &= 0, \quad \forall i \label{KKT_relaym2}\\
-\frac{\hp_2 - \hp_1}{1 + \pr{i}} - \frac{\hp_1}{1 + \ps{i} + \pr{i}} + \sum_{k=i}^T \eta_k &= 0,  \quad \forall i
\label{KKT_sourcem2}\\
\sum_{k=i}^T \mu_k - \alpha \sum_{k=i}^T \eta_k + \sum_{k=i}^T \gamma_k - \rho_i &= 0,  \quad \forall i \label{KKT_energym2}
\end{align}
We do not have an analytical closed form solution for (\ref{KKT_relaym2})-(\ref{KKT_energym2}). Since (\ref{opt_prob2m}) is a convex optimization problem, standard numerical methods for convex optimization may be employed. We find that the solution of (\ref{opt_prob2m}) has a simple form in some special cases, which we investigate next.

When $\alpha=1$, we find that the optimal solution of (\ref{opt_prob2m}) requires all the energy of user $1$ transferred to user $2$. To verify this fact, we note from (\ref{KKT_relaym2})-(\ref{KKT_sourcem2}) that $\eta_T>\mu_T$, since $\hp_2>\hp_1$. Combining this with (\ref{KKT_energym2}), we obtain $ \gamma_T - \rho_T > 0$. Note that if $\sum_{i=1}^T \delta_i < \sum_{i=1}^T \es{i}$, then $\gamma_T=0$ and hence $\rho_T<0$, which is not possible. Thus, in the optimal solution, we must have $\sum_{i=1}^T \delta_i = \sum_{i=1}^T \es{i}$. Therefore, user 1 should not transmit any data, and instead should transfer all of its energy to user 2 by the end of $T$ slots. This policy corresponds to point $4$ in Fig.~\ref{maccap}. On the other hand, sum-rate optimal point, point 3, achieves the same throughput as point 4. This implies that when $\alpha=1$, points $2$, $3$ and $4$ in Fig.~\ref{maccap} lie on the $45^{o}$ line. In particular, the optimal throughput of user $2$, which is obtained by single-user throughput maximization subject to harvested energies of user 2 plus the harvested energies of user 1, coincides with the optimal sum-throughput.

When $\alpha<1$, points $3$ and $4$ in Fig.~\ref{maccap} are not on the same line. We observe that when $\frac{\hp_2}{\hp_1}$ is sufficiently large, user 1 transfers all of its energy to user 2. In order to verify this claim, we note that, if user 1 transfers some but not all of its energy at the end of $T$ slots, then $\gamma_T=0$. In this case, from (\ref{KKT_relaym2})-(\ref{KKT_energym2}) and as $\rho_T \geq 0$, we have
\begin{align}
\label{ff}
\frac{1+\pr{T}}{1+\pr{T}+\ps{T}} \geq \frac{\alpha(\hp_2-\hp_1)}{(1-\alpha) \hp_1}
\end{align}
Since $\frac{1+\pr{T}}{1+\pr{T}+\ps{T}}<1$, we conclude that if $\frac{\alpha(\hp_2-\hp_1)}{(1-\alpha) \hp_1}\geq 1$, then (\ref{ff}) cannot be satisfied which forces all of the energy of user 1 to be transferred to user 2 so that $\gamma_T>0$. Note that $\frac{\alpha(\hp_2-\hp_1)}{(1-\alpha) \hp_1}\geq 1$ is equivalent to $\frac{\hp_2}{\hp_1}\geq \frac{1}{\alpha}$. Hence, if $\frac{\hp_2}{\hp_1}\geq \frac{1}{\alpha}$, in the optimal solution, user 1 transfers all of its energy to user 2. This implies that the capacity region boundary intersects the horizontal line in Fig. \ref{maccap} with slope less than or equal to $\frac{1}{\alpha}$.

\section{Numerical Results}

In this section, we provide numerical examples for the multi-user settings studied and illustrate the resulting optimal policies. In all examples, we assume that the slot length is 1 second, noise spectral density is $N_0=10^{-19}$ W/Hz and the available bandwidth is 1 MHz. Moreover, path loss of each link in each model is set to 100 dB for convenience.

\subsection{Numerical Example for the Gaussian Two-Hop Relay Channel}

We first consider the two-hop relay channel with energy harvesting and energy transfer in Section~\ref{model}. In our first numerical study, the source and the relay have the energy arrival profiles $\esource = [2 ;3 ;5 ;4]$ mJ and $\erelay = [5 ;1 ;2 ;1]$ mJ, respectively, and the wireless energy transfer efficiency is $\alpha = 0.5$. We note that for these energy harvesting profiles the relay energy profile is higher at the beginning and lower at the end with crossing only once in the third slot. Therefore, the resulting optimal rate profiles are matched in the optimal policy. An optimal energy transfer vector is $\deltav =  [0 ;0 ;1.33 ;3.33]$ mJ and the resulting optimal power allocation vectors after the energy transfer are $\prelay = \psource = [2 ;3 ;4 ;6.33]$ mW. We note that while the optimal energy transfer profile is not unique, resulting optimal powers are unique.

Next, we change the energy arrival profiles for the source and the relay as $\esource = [12 ;0 ;0 ;0]$ mJ and $\erelay = [5 ;1 ;0 ;2]$ mJ, respectively, with energy transfer efficiency $\alpha = 0.5$. Note that the source node is not energy harvesting. In this case, we find the optimal energy transfer vector as  $\deltav = [2.67; 0; 0; 0]$ mJ and the resulting optimal power vectors are $\prelay = \psource = [2.33 ;2.33 ;2.33 ;2.33]$ mW. Note that the optimal power sequences for the source and the relay match in this specific example, which does not hold in general.

\subsection{Numerical Example for the Gaussian Two-Way Channel}

In this section, we consider the Gaussian two-way channel model in Section~\ref{twcmodel}. The energy arrival profiles of user 1 and user 2 are $\esource = [5; 10; 5]$ mJ and $\erelay = [10; 5; 10]$ mJ, respectively, and the wireless energy transfer efficiency is set to $\alpha = 0.7$. We found the capacity region by running the two-dimensional directional water-filling algorithm for all $\hp_1,\hp_2 \geq 0$. We plot the resulting capacity region in Fig.~\ref{twccapsim}, where we also plot the capacity region when energy transfer is not allowed. Note that when energy transfer is not allowed, the capacity region is the rectangle with single-user optimal rates subject to the individual energy arrivals. We observe that the availability of wireless energy transfer significantly improves the capacity region.

\begin{figure}[t]
\begin{center}
\includegraphics[scale=0.75]{./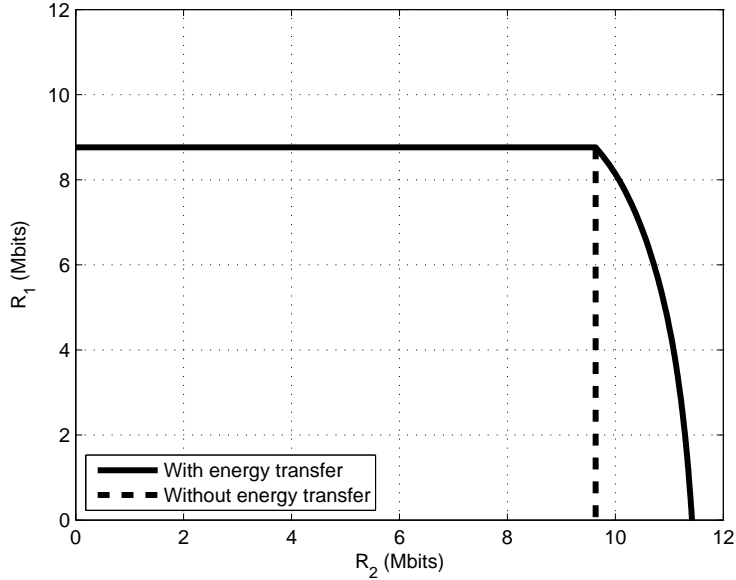}
\end{center}
\caption{Capacity region of the two-way channel with energy transfer.}
\label{twccapsim}
\vspace*{-0.1in}
\end{figure}

\subsection{Numerical Example for the Gaussian Multiple Access Channel}

In this section, we consider the Gaussian multiple access channel model in Section~\ref{macmodel}. The energy arrival profiles of user 1 and user 2 are $\esource = [5; 2; 5]$ mJ and $\erelay = [1; 3; 1]$ mJ, respectively, and wireless energy transfer efficiency is $\alpha = 0.5$. We plot the resulting capacity region in Fig.~\ref{maccapsim} and we compare it with the region when no energy transfer is allowed. Note that when no energy transfer is allowed, the region is found by the backward directional water-filling algorithm in \cite{jing12jcn}. We observe in Fig.~\ref{maccapsim} that the boundary of the capacity regions when energy transfer is allowed and not allowed match when the priority of user 1 is higher than the priority of user 2. However, the availability of wireless energy transfer significantly improves the capacity region when priority of user 2 is higher than the priority of user 1.

\begin{figure}[t]
\begin{center}
\includegraphics[scale=0.75]{./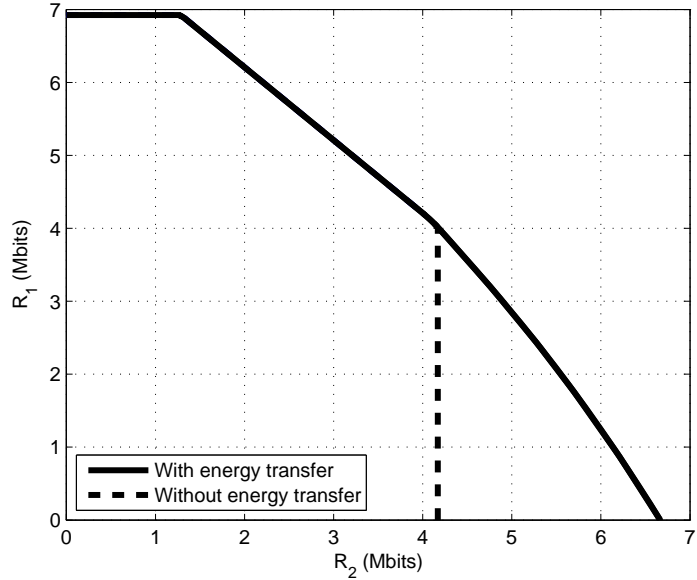}
\end{center}
\caption{Capacity region of the multiple access channel with energy transfer.}
\label{maccapsim}
\vspace*{-0.1in}
\end{figure}

\section{Concluding Remarks}

Energy cooperation made possible by wireless energy transfer is a fundamental shift in terms of the energy dynamics of a wireless network, yielding new performance limits. In this paper, we studied the communication performance of simple two- and three-node wireless networks in a deterministic setting where nodes harvest energy from the environment and wireless energy transfer is possible from one user to another in one-way and with efficiency $\alpha$. We first considered the Gaussian two-hop relay channel and studied the end-to-end throughput maximization problem. We showed that if the relay energy profile is higher first and then lower, the rates of the source and the relay nodes need to be matched in the optimal policy. We also showed that if the source is not energy harvesting, then transferring energy in the first slot is optimal. Next, we studied the capacity region of the Gaussian two-way channel. We showed that the boundary of the capacity region is achieved by policies that are given by a generalized version of two-dimensional directional water-filling algorithm. Finally, we studied the Gaussian multiple access channel. We showed that no energy transfer is needed if the priority of the first user is higher, and all of the energy needs to be transferred to the second user if the priority of the second user is sufficiently high. These results reveal new insights on how energy is optimally allocated in multi-user scenarios when wireless energy transfer is available as a new degree of freedom in network design. We remark that the analysis for finding the optimal policies in each multi-user setting can be extended for the cases when bi-directional energy transfer is allowed. In the two-hop relay setting, if bi-directional energy transfer is allowed, perfectly matching the energy profiles of the source and the relay nodes would be feasible and hence optimal: In this case, we collect energy arrivals of the source and the relay in a single energy queue and perform a single-user optimization. We then divide resulting power allocation equally for the source and the relay. Similarly, \cite{kaya13ita} recently presented the extension of the analysis for two-way and multiple access channels when bi-directional energy transfer is allowed.

\appendices

\section{Proof of Lemma~\ref{convexlemma}}
\label{pf1}

Consider two feasible power policies and energy transfer profiles $(\psource_1,\prelay_1,\deltav_1)$ and $(\psource_2,\prelay_2,\deltav_2)$. Let us consider a new policy as a convex combination of these two policies, i.e., $(\mathbf{P_{3}},\mathbf{\bar{P}_{3}}, \deltav_3)  = \lambda (\psource_1,\prelay_1,\deltav_1) + (1-\lambda) (\psource_2,\prelay_2,\deltav_2) $ for $0 < \lambda < 1$.  First we show that this new policy is feasible:
\begin{align}
\sum_{i=1}^k P_{3i} & =  \sum_{i=1}^k \lambda P_{1i} + (1-\lambda) P_{2i} \\
& \leq \lambda \sum_{i=1}^k  (E_{i} - \delta_{1i}) + (1-\lambda) \sum_{i=1}^k (E_{i} - \delta_{2i}) \\
& = \sum_{i=1}^k (E_{i} - \delta_{3i}), \quad k=1,\dots,T
\end{align}
We use similar arguments for $\bar{P}_{3i},\delta_{3i}$ and show that the policy $(\mathbf{P_{3}},\mathbf{\bar{P}_{3}}, \deltav_3)$ is feasible. 

Now, consider the upper corner points of the achievable rate regions for $(\psource_1,\prelay_1,\deltav_1)$ and $(\psource_2,\prelay_2,\deltav_2)$. Since $\log(1 + p)$ is concave in $p$, we have
\begin{align}
\sum_{i=1}^T \log(1 + P_{3i}) &>  \sum_{i=1}^T \lambda \log(1 + P_{1i}) + (1-\lambda) \sum_{i=1}^T \log(1+P_{2i}) \\
\sum_{i=1}^T \log(1 + \bar{P}_{3i}) &>  \sum_{i=1}^T \lambda \log(1 + \bar{P}_{1i}) + (1-\lambda) \sum_{i=1}^T \log(1+\bar{P}_{2i})
\end{align}
This means that the new policy $(\mathbf{P_{3}},\mathbf{\bar{P}_{3}}, \deltav_3)$ achieves a higher throughput for both users than the line connecting the two upper corner points under policies $(\psource_1,\prelay_1,\deltav_1)$ and $(\psource_2,\prelay_2,\deltav_2)$. Therefore, the region $\mycap$ is a convex region.

\section{Proof of Lemma~\ref{convexlemmamac}}
\label{pf2}

Consider two feasible power policies and energy transfer profiles $(\psource_1,\prelay_1,\deltav_1)$ and $(\psource_2,\prelay_2,\deltav_2)$. Let us consider a new policy as a convex combination of these two policies, i.e., $(\mathbf{P_{3}},\mathbf{\bar{P}_{3}}, \deltav_3)  = \lambda (\psource_1,\prelay_1,\deltav_1) + (1-\lambda) (\psource_2,\prelay_2,\deltav_2) $ for $0 < \lambda < 1$. Since the constraints in set $\mathcal{F}$ are linear in the power vectors, it can be shown as in the proof of Lemma~\ref{convexlemma} in Appendix~\ref{pf1} that this new policy is feasible. 

Now, let $S_i$ be the pentagon created by the policy $(\psource_i,\prelay_i,\deltav_i)$, for $i=1,2,3$. Choose $\boldsymbol{t}_1 \in S_1$ and $\boldsymbol{t}_2 \in S_2$ to form $\boldsymbol{t}_3 = \lambda \boldsymbol{t}_1 + (1-\lambda) \boldsymbol{t}_2$ for $0 \leq \lambda \leq 1$. We need to show that $\boldsymbol{t}_3 \in S_3$. We proceed as follows:
\begin{align}
t_{31}  &= \lambda t_{11} + (1-\lambda) t_{21}\\
&\leq \lambda \sum_{i=1}^T \log(1 + P_{1i}) + (1-\lambda) \sum_{i=1}^T \log(1+P_{2i}) \\
& \leq \sum_{i=1}^T \log(1 + \lambda P_{1i} + (1-\lambda) P_{2i}) \\
&= \sum_{i=1}^T \log(1 + P_{3i})
\end{align}
Similarly, we show $t_{32} \leq \sum_{i=1}^T \log(1 + \bar{P}_{3i})$. Finally 
\begin{align}
t_{31} + t_{32}  &= \lambda (t_{11} + t_{21}) + (1-\lambda) (t_{21} + t_{22}) \\
&\leq \lambda \sum_{i=1}^T \log(1 + P_{1i} + \bar{P}_{1i}) + (1-\lambda) \sum_{i=1}^T \log(1+P_{2i}+\bar{P}_{2i}) \\
& \leq \sum_{i=1}^T \log(1 + \lambda (P_{1i} + \bar{P}_{1i}) + (1-\lambda) (P_{2i} + \bar{P}_{2i})\\
& = \sum_{i=1}^T \log(1 + P_{3i} + \bar{P}_{3i})
\end{align}
These inequalities show that $\boldsymbol{t}_3 \in S_3$ since it satisfies the boundary conditions of $S_3$. Therefore, the region $\mycap$ is a convex region.

\end{document}